\lstdefinelanguage{pseudocode-bannachtantauskambath}{
  morekeywords={
    algorithm,each,if,then,else,while,do,repeat,until,seq,seqdo,return,call,
    for,pardo,in,foreach,print,output,input,exit,
    break,loop,end,begin,goto,pardo,par,global,read,write,to,stop,idle,procedure,function,
    parallel,
  },
  sensitive=true,
  morecomment=[l]{//},
  morestring=[b]",
}
\lstdefinestyle{pseudocode-bannachtantauskambath}{
  language=pseudocode-bannachtantauskambath,
  numberstyle  =\scriptsize,
  numbers      =left,
  basicstyle   =\rmfamily,
  keywordstyle =\bfseries\itshape,
  columns      =fullflexible,
  mathescape   =true,
  identifierstyle=\itshape,
  literate={<}{{$<$}}1
  {>}{{$>$}}1
  {-}{{-}}1
  {+}{{$+$}}1
  {*}{{$\cdot$}}1
  {<-}{{$\gets$\ }}2
  {<=}{{$\leq$}}1
  {>=}{{$\geq$}}1
  {!=}{{$\neq$}}1
  {==}{{$=$}}1
}
\let\origlemma\lemma
\def\lemma#1#2#3\end{\origlemma%
  \expandafter\gdef\csname savedlemma#2\endcsname{#3}%
  \label{#2}#3\end}
\def\reclaimlemma#1{\claimagain{Claim of Lemma~\ref{#1}}\expandafter\expandafter\expandafter\ignorespaces\csname savedlemma#1\endcsname\endclaimagain\ignorespaces}
\let\origtheorem\theorem
\def\theorem#1#2#3\end{\origtheorem%
  \expandafter\gdef\csname savedtheorem#2\endcsname{#3}%
  \label{#2}#3\end}
\def\reclaimtheorem#1{\claimagain{Claim of Theorem~\ref{#1}}\expandafter\expandafter\expandafter\ignorespaces\csname savedtheorem#1\endcsname\endclaimagain\ignorespaces}
\newtheorem{fact}[theorem]{Fact}
\newtheorem{observation}[theorem]{Observation}
\newcommand\Class[1]{%
  \mathchoice%
  {\text{\normalfont\fontsize{9pt}{10pt}\selectfont$\mathrm{#1}$}}%
  {\text{\normalfont\fontsize{9pt}{10pt}\selectfont$\mathrm{#1}$}}%
  {\text{\normalfont$\mathrm{#1}$}}%
  {\text{\normalfont$\mathrm{#1}$}}%
}
\newcommand{\Lang}[1]{%
  \ifmmode{%
    \text{\normalfont\textsc{#1}}%
  }%
  \else
  {\normalfont\textsc{#1}}%
  \fi}
\newcommand\Algo[1]{%
  \mathchoice%
  {\text{\normalfont\textsf{#1}}}%
  {\text{\normalfont\textsf{#1}}}%
  {\text{\normalfont\textsf{#1}}}%
  {\text{\normalfont\textsf{#1}}}%
}
\newcommand\PLang[1][]{p\def\test{#1}\ifx\test\stockhustantauempty\else_{\mathit{#1}}\fi\text-\penalty15\Lang}
\newcommand\PLangText[1][]{$p\def\test{#1}\ifx\test\empty\else_{\mathrm{#1}}\penalty15\fi$-\penalty15\hskip0pt\textsc}
\title{Towards Work-Efficient Parallel~Parameterized~Algorithms}
\author{%
  Max Bannach\inst{1} \and
  Malte Skambath\inst{2} \and
  Till Tantau\inst{1}}
\authorrunning{M. Bannach et al.}
\institute{
  Institute for Theoretical Computer Science, Universit\"at zu L\"ubeck, Germany\\
  \email{\{bannach,tantau\}@tcs.uni-luebeck.de}%
  \and
  Department of Computer Science, Kiel University, Germany\\
  \email{malte.skambath@email.uni-kiel.de}%
}
\begin{document}
\maketitle

\begin{abstract}
  Parallel parameterized complexity theory studies how 
  fixed-parameter tractable (fpt) problems can be solved in
  parallel. Previous theoretical work focused on parallel
  algorithms that are very fast in principle, but did not take into
  account that when we only have a small number of processors
  (between 2 and, say,~1024), it is more important
  that the parallel algorithms are \emph{work-efficient}. In the
  present paper we investigate how work-efficient fpt algorithms can
  be designed. We review standard methods from fpt theory, like
  kernelization, search trees, and interleaving, and prove
  trade-offs for them between work efficiency and runtime
  improvements. This results in a toolbox for
  developing work-efficient parallel fpt algorithms.

  \keywords{Parallel computation, fixed-parameter tractability, work
    efficiency}
\end{abstract}

\section{Introduction}

Since its introduction by Downey and Fellows~\cite{DowneyF1999} about thirty years ago,
parameterized complexity theory has been successful at
identifying which problems are fixed-parameter tractable (fpt), but has
also had high practical impact. Efforts to formalize and devise
\emph{parallel} fpt algorithms date back twenty
years~\cite{CesatiI1998,CheethamDRST03}, but a lot of the theoretical
research is quite recent~\cite{BannachST15,BannachT17,ElberfeldST15}.
The findings can be summarized, very
briefly, as follows: (1) It is possible to classify the problems in
the class $\Class{FPT}$ of fixed-parameter tractable problems
according to how well they can be solved in parallel. (2) We find
natural parameterized problems on all levels, from 
problems in $\Class{FPT}$ that are inherently sequential to problems
that can be solved in constant (!) parallel time.

One aspect that the existing research lacks
-- and which may also
explain the small number of actual implementations --
is a fine-grained analysis of the \emph{work} done by parallel fpt
algorithms, which is defined as the total number of
computational steps done by an algorithm summed over all processing units
(in particular, for a sequential algorithm, its work equals its
runtime). Unfortunately, ``the work must be done'': on a machine with
$p$ processors, a parallel algorithm with $W(n)$ work cannot
finish faster than in time $W(n) / p$ on length-$n$ inputs. Since 
real-life values of~$p$ are small (between 2 and perhaps~1024), a
large $W(n)$ can lead to actual runtimes (``wall clock runtimes'')
that are larger than those of sequential algorithms. 

A common pattern in the design and analysis of parallel
algorithms is that as we try to decrease the work $W(n)$ in order to
get down the quotient $W(n) / p$, the ``theoretical'' parallel runtime
$T(n)$ rises. This pattern is repeated in the fpt setting:
Table~\ref{table:vertex-cover} shows the work and time needed by
different parallel algorithms for $\PLang{vertex-cover}$.
Note that we will never be able to reduce the work of a parallel algorithm
below the work of the fastest sequential algorithm and an
algorithm is called \emph{work-optimal} if it matches this lower bound.

\begin{table}[t]
  \caption{Faster parallel algorithms for
    $\PLang{}$\textsc{vertex-cover} entail more work. We can achieve a
    runtime of $O(1)$ at the cost 
    of the expensive use of color coding~\cite{BannachST15}.  If we
    allow $O(\log n)$ time, a parallel Buss kernelization in
    conjunction with a simple brute force algorithm reduces the
    work. The next two lines are based on shallow search trees,
    discussed in Section~\ref{sec:shallow-st}, and the work starts to
    become competitive with sequential algorithms. The last lines show
    that being work-competitive to the best known sequential
    algorithms implies larger and larger runtimes.}
  \label{table:vertex-cover}
  \begin{tabular}{ll}
    \toprule
    Work                & Parallel Time  \\\midrule
    $O(kn + 2^{2^k+k})$ & $O(1)$  \\
    $O(kn + 2^{k^2}\cdot k^2)$ \qquad\hbox{}& $O(\log n)$ \\
    $O(kn + 3^{k}k^2)$           & $O(\log n+\log^2 (k))$ \\
    $O(kn + 2^{k})$           & $O(\log n + \log^4 (k))$  \\
    $O(kn + 1.6181^{k})$       & $O(\log n +  k\log(k))$  \\
    $O(kn + 1.4656^{k})$    & $O(\log n +  k\log(k))$     \\
    $O(kn + 1.2738^{k})$    & $O(\log n + k^4\sqrt{k}))$  \\
    \bottomrule
  \end{tabular} 
\end{table}

\paragraph*{Our Contributions.}  
Many fpt algorithms are based on the \emph{search
  tree technique,} which recursively 
traverses a search tree whose depth and degree are
bounded by the parameter, resulting in a sequential runtime of the
form $c^k$ or perhaps $(ck)^k$ for some constant~$c$. Intuitively, search tree
algorithms should be easy to parallelize since the different branches 
of the search tree can be processed independently.
We show that this intuition is correct and we provide precise
conditions for search tree algorithms under which they can be turned into
work-efficient parallel algorithms.
A parallel search tree algorithm still has to process, and thus
construct, all branches of the tree, leading to a
parallel runtime that is proportional to the depth of the search tree,
which is normally $\Theta(k)$. This theoretical runtime is
typically much smaller than the actual wall-clock time
$W(n)/p=\bigl(c^k+O(n)\bigr)/p >\!\!> k$.
However, we show that in some cases there is room for improvement and
the runtime of $\Omega(k)$ can be replaced by $O(\log k)$ without
increasing the work. The idea is to modify the search tree such that
it ``branches aggressively,'' thereby reducing the depth to $O(\log k)$.

A second tool of parameterized complexity theory are
\emph{kernelizations:} mappings from input instances to membership-equivalent
instances whose size is bounded by the parameter.
Some problems admit more than one kernelization 
and we may be able to speedup the computation by applying all of them
in a cleverly chosen order. For sequential computations, determining this order
is simple: First apply the fastest kernel, which 
may however result in a still rather large instance. Then apply a
slower kernel with a smaller output -- the high runtime 
matters less since it is applied to a smaller
input. Such \emph{kernel cascades} are also possible in the parallel
setting, but here kernelizations may have incomparable work, runtime,
and output size. We provide a general procedure to combine a set of
parallel kernelizations into a work-efficient and fast kernelization
that minimizes the output size.

A third tool is \emph{interleaving:} Instead of using a kernelization
just as a preprocessing procedure, during a search tree traversal 
call the kernel algorithm at each tree node to ensure that the
intermediate instances are small. In the sequential setting this has
the desirable effect  
of turning a runtime of the form $O(k^c\cdot \xi^k+n^c)$ into one
of the form $O(\xi^k+n^c)$~\cite{NiedermeierR2000}. We show that interleaving is also
possible in the parallel setting in a work-efficient manner, including
the mentioned depth-$O(\log k)$ search trees that do not
arise in the sequential setting. 

\paragraph*{Related Work.} First efforts to formalize  
\emph{parallel} fpt algorithms are due to Cesati and Di
Ianni~\cite{CesatiI1998}, though the definitions were rather ad
hoc. Around the same time, Cai et al.~\cite{CaiCDF97} investigated
\emph{space bounded} fpt algorithms -- and since 
logarithmic space is closely related to parallel computations, these
algorithms can be seen as parallel fpt results.
A
first experimental analysis of a parallel 
fpt algorithm for vertex cover is due to Cheetham et
al.~\cite{CheethamDRST03}.
Recent work on a theoretical framework for parallel fpt has
mainly been done by Bannach et~al.~\cite{BannachST15,BannachT17} and
Elberfeld et~al.~\cite{ElberfeldST15}. These papers establish hierarchies of
parallel parameterized complexity classes and place well-known
problems in them, but do not consider
work-efficiency. Many algorithms in the cited papers are
based on the expensive \emph{color-coding} technique, which
needs work $O(n\log^2 n \log c \cdot c^{k^2} \cdot k^4 )$ and
results in unpractical algorithms.

\paragraph*{Organization of This Paper.}
Following the preliminaries, we investigate, in order, parallel search trees,
parallel kernels, and parallel interleaving. 

\section{Preliminaries}

A \emph{parameterized problem} $Q$ is a set
$Q\subseteq\Sigma^* \times \mathbb N$, where in an instance $(x,k) \in
\Sigma^* \times \mathbb N$
the number $k$ is called the \emph{parameter.} A parameterized problem
is \emph{fixed-parameter tractable} (in $\Class{FPT}$) if
there is an algorithm that decides for all $(x,k)\in\Sigma^* \times
\mathbb N$ whether $(x,k) \in Q$ holds in time  $f(k)\cdot 
|x|^c$. Here, and in the following, $f$ is always a computable
function and $c$~a constant.
As model of parallel computation we use standard \textsc{pram}s (rather
than circuits), see for instance~\cite{JaJa92}. For a
\textsc{pram} program, let $T_p(n)$ denote the maximum time the program needs
on inputs of length~$n$ when $p$ processors are available. Let $T(n) =
\inf_{p \to \infty} T_p(n)$ and let $W(n)$ denote the maximum 
number of computational steps (summed over all non-idle processors)
performed by the algorithm on inputs of length~$n$. It is
well-known that $T_p(n) \le W(n)/p +
T(n)$ holds when the set of non-idle processors is easily computable
for each step (so a compiler can schedule the to-be-done
work for each step when less processors are available than there is
work to be done)~\cite{JaJa92}. We have $T_p(n) \ge W(n)/p$
and $T_p(n) \ge T(n)$. Since for fast parallel
algorithms we have $W(n) /p >\!\!> T(n)$, the work of a parallel
algorithm is the dominating factor. We say an algorithm is
\emph{work-optimal} if its work is the best possible among all
algorithms.
This definition hinges, to a certain degree, on the
fact that there are clear notions of ``minimal work'' and ``minimal
runtime''. In the parameterized world, however, this is no longer the
case: it is not clear which of the terms $3^kn$, $2^kn^2$, $n^3+2^k$,
and $n^k$ is ``minimal.'' Depending on the values of $n$ and~$k$,
any of the terms may be more desirable than
the others. For this reason, we strive for optimality only with
respect to the following notion (throughout the paper, we
assume that functions like $W(n,k)$ or $T(n,k)$ are monotone with
respect to both parameters):
An algorithm~$A$ is \emph{work-competitive to a function~$f$} if
$W_A\in O(f)$, that is, if $W_A(n,k) \le c\cdot f(n,k)$
for all $n \ge n_0$ and $k\ge k_0$ for some constants $c$, $n_0$,
and $k_0$. An algorithm~$A$ is \emph{work-competitive to an
  algorithm~$B$} if it is work-competitive to the function~$W_B$.

\section{Work-Efficient Parallel Search Tree Algorithms}\label{sec:search-trees}

For a parameterized problem~$Q$ and an instance $(x,k)$, a
\emph{search tree} algorithm invokes a \emph{branching rule} (or
\emph{branching algorithm}) to determine a sequence $(x_1,k_1)$, \dots,
$(x_m,k_m)$ of new instances such that $(x,k) \in Q$ if, and only if,
we have $(x_i,k_i) \in Q$ for at least one~$i$. Crucially, each $k_i$
must be smaller than $k$, that is, $d_i = k - k_i > 0$.  (Let us also
require $|x_i| \le |x|$ to simplify the presentation, but this is less
crucial.) The search tree algorithm recursively calls itself on these
new instances (unless it can directly decide the
instance for ``trivial''~$k$ or for ``trivial''~$x_i$). An
example of a search tree algorithm is the branching algorithm
for the vertex cover problem where we ``branch on an arbitrary
edge'': Map $(G,k)$ to $(G-\{u\},k-1)$
and $(G-\{v\},k-1)$ for an arbitrary edge $\{u,v\}$ (we have
$d_1 = d_2 = 1$ and $m=2$). Another example is the
branching rule ``branch on the maximum-degree vertex and either take
it into the vertex cover or all of its neighbors,'' meaning that we map $(G,k)$ to
 $(G - \{u\}, k-1)$ and $(G - N(u), k - |N(u)|)$ where $N(u)$
is the neighborhood of~$u$. This leads to $d_1 = 1$ and
$d_2 = |N(u)|$; and since we can solve the vertex cover problem
directly in graphs of maximum degree $2$, we have $d_1 = 1$ and
$d_2 \ge 3$.

\subsection{Simple Parallel Search Trees}

As mentioned in the introduction, parallelizing a search tree
is more or less trivial, since we can process all
resulting branches in parallel. Of course, it
may now become important how well the \emph{branching rule} can be
parallelized, since we have to invoke it on each level of the tree.
In detail, for a set~$D$ of vectors $d =
(d_1,\dots,d_m)$, a \emph{$D$-branching algorithm $\Algo B$ for~$Q$}
is an algorithm that on input $(x,k)$ either correctly outputs ``$(x,k) \in Q$'',
``$(x,k) \notin Q$'', or instances
$(x_1,k-d_1)$, \dots, $(x_m,k-d_m)$ for some $d\in D$ such that $(x,k)
\in Q$ if, and only if, $(x_i,k-d_i) \in Q$ for some $i \in
\{1,\dots,m\}$. Let $\Algo{SeqSearchTree-B}$ and
$\Algo{ParSearchTree-B}$ denote the sequential and parallel search
tree algorithms based on~$\Algo B$, respectively. Note that both
algorithms traverse the same tree on an input~$(x,k)$. Let 
$\operatorname{size}_{\Algo B}(n,k)$ and $\operatorname{depth}_{\Algo
  B}(n,k)$ denote the maximum number of nodes and the maximum depths
of the search trees traversed by the algorithms on inputs of
length~$n$ and parameter~$k$, respectively. 

From a sequential perspective, the objective in the design of search
tree algorithms is to reduce the size of the search tree since this
will be the dominating factor in the runtime. From the parallel 
perspective, however, we will also be interested in the \emph{depth}
of the search tree since, intuitively, this depth corresponds to the
parallel time needed by the algorithm.

\begin{theorem}\label{thm-branch-simple}
  Let $\Algo B$ be a branching algorithm. Then
  {\begin{align*}
    T_{\Algo{SeqSearchTree-B}}(n,k)=
    W_{\Algo{SeqSearchTree-B}}(n,k)&= O(\operatorname{size}_{\Algo
                                       B}(n,k) \cdot W_{\Algo B}(n,k)), \\
    T_{\Algo{ParSearchTree-B}}(n,k)&= O(\operatorname{depth}_{\Algo
                                       B}(n,k) \cdot T_{\Algo B}(n,k)),\\
    W_{\Algo{ParSearchTree-B}}(n,k)&= O(\operatorname{size}_{\Algo
                                       B}(n,k) \cdot W_{\Algo B}(n,k)). 
  \end{align*}}
\end{theorem}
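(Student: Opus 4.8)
The plan is to analyze the recursion tree of the search tree algorithms directly, since both $\Algo{SeqSearchTree-B}$ and $\Algo{ParSearchTree-B}$ traverse the \emph{same} tree $\mathcal T$ on a given input $(x,k)$; only the scheduling differs. At each node of $\mathcal T$ we have some instance $(x',k')$ with $|x'|\le n$ and $k'\le k$, and the work performed \emph{at that node} (excluding the recursive calls) is the cost of one invocation of $\Algo B$ on $(x',k')$, which by monotonicity of $W_{\Algo B}$ is at most $W_{\Algo B}(n,k)$. Since $\mathcal T$ has at most $\operatorname{size}_{\Algo B}(n,k)$ nodes, summing over all nodes gives a total of $O(\operatorname{size}_{\Algo B}(n,k)\cdot W_{\Algo B}(n,k))$ steps. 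This bound holds regardless of how the steps are distributed over processors, so it simultaneously gives $W_{\Algo{SeqSearchTree-B}}=W_{\Algo{ParSearchTree-B}}=O(\operatorname{size}_{\Algo B}(n,k)\cdot W_{\Algo B}(n,k))$, and for the sequential algorithm $T=W$ by definition.

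For the parallel runtime bound, I would argue along a root-to-leaf path. The parallel algorithm, at a node, invokes $\Algo B$ (taking time at most $T_{\Algo B}(n,k)$), and then spawns the recursive calls on the child instances to run in parallel; it waits for all of them and combines their answers in $O(1)$ time (a big OR over the children, computable in constant parallel time, or $O(\log m)$ if one is pedantic about fan-in — here $m$ is bounded by a constant depending only on $D$, or at worst by $k$, and in either case this is absorbed; I will keep it as $O(1)$ as is standard for \textsc{pram}s with concurrent write, matching the paper's conventions). Hence the time along any root-to-leaf path is at most the number of nodes on that path times $(T_{\Algo B}(n,k)+O(1))=O(T_{\Algo B}(n,k))$, and the longest such path has length $\operatorname{depth}_{\Algo B}(n,k)$, giving $T_{\Algo{ParSearchTree-B}}(n,k)=O(\operatorname{depth}_{\Algo B}(n,k)\cdot T_{\Algo B}(n,k))$.

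The one genuine subtlety — and the step I expect to need the most care — is the parallel \emph{work} bound: naively, when a node spawns $m$ children and each child spawns its own, the total number of processors that may be simultaneously active is the number of leaves, which can be as large as $\operatorname{size}_{\Algo B}(n,k)$, so we must check that the \textsc{pram} scheduling overhead (assigning processor ranges to the recursive subcalls, and merging results back) does not blow up the work. The point is that splitting a block of processors among the $m$ children, and collecting their outputs, costs only $O(m)$ extra work at each node (or $O(m\log m)$ if one prefers an EREW-style computation of prefix sums over the children's sizes), and since $\sum_{\text{nodes}} m$ is just the number of edges of $\mathcal T$, i.e.\ again $O(\operatorname{size}_{\Algo B}(n,k))$, this bookkeeping is dominated by the $W_{\Algo B}(n,k)\ge 1$ term already charged at each node. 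One should also note that the invariant "$|x_i|\le|x|$ and $k_i\le k$" from the definition of a branching algorithm is exactly what lets us replace the per-node costs $W_{\Algo B}(|x'|,k')$ and $T_{\Algo B}(|x'|,k')$ by the uniform bounds $W_{\Algo B}(n,k)$ and $T_{\Algo B}(n,k)$ via monotonicity — without it, the clean product form would fail. With these observations in place, summing gives all three displayed equations.
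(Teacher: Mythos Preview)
Your argument is correct and is essentially what the paper has in mind; the paper's own proof is a one-sentence ``follows directly from the definitions,'' so your write-up simply unpacks that: same-tree observation, per-node bound via monotonicity, sum over nodes for work, longest root-to-leaf path for parallel time. Your extra discussion of processor scheduling and the OR-combine is fine but more than the paper supplies (it takes these as implicit in the \textsc{pram} model).
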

\begin{proof}
  This follows directly from the definitions. Note that the runtime of
  a sequential simulation of a parallel algorithm $\Algo B$ takes
  time~$W_{\Algo B}(n,k)$ and if $\Algo B$ is already a sequential
  algorithm, then $T_{\Algo B}(n,k) = W_{\Algo B}(n,k)$.
  \qed
\end{proof}

Of course, a lot is known concerning the size of search trees
resulting from $D$-branching algorithms: If
$s(k) = \operatorname{size}_{\Algo B}(n,k)$ is independent of~$n$, we
always have $s(k) \le \max_{(d_1,\dots,d_m)\in D} (s(k-d_1) + \cdots +
s(k-d_m)+1)$ and it is known~\cite{NiedermeierR2000} how to 
compute a number $\xi_D$ such that $s(k) = \Theta(\xi_D^k)$ is a
minimal solution of the inequality: for $d = (d_1,\dots,d_m)$ the
number $\xi_d$ is the reciprocal of the minimal root of the polynomial
$p(x) = 1- \sum_{i=1}^m x^{d_i}$ and $\xi_D = \sup_{d\in D} \xi_d$. For  instance, for the simple
branching algorithm for the vertex cover problem with $D = \{(1,1)\}$ we
have $\xi_D = 2$ and the search tree has size $2^k$, while for  $D =
\{(1,3); (1,4); (1,5); \dots \}$  from the  branch-on-a-degree-3-vertex
algorithm we have $\xi_D = \xi_{(1,3)} \approx 1.4656$.
Regarding the depth of the search tree, it is clearly upper-bounded by
$k/\min d$ for the ``worst $d \in D$'' since in each recursive call we
decrease $k$ by at least the minimal $d_i$ in~$d$. In summary, we see that
$\Algo{ParSearchTree-B}$ is always work-competitive to 
$\Algo{SeqSearchTree-B}$ and $T_{\Algo{ParSearchTree-B}}(n,k) =
\smash{\frac{k}{\max_{d\in D}\min_i d_i}} \cdot T_{\Algo B}(n,k)$ and
$W_{\Algo{ParSearchTree-B}}(n,k)= \xi_D^k \cdot W_{\Algo
  B}(n,k)$.

\subsection{Shallow Parallel Search Trees}\label{sec:shallow-st}

If we wish to find faster work-optimal parallel search tree algorithms,
a closer look at Theorem~\ref{thm-branch-simple} shows that there are
two lines of attack: First, we can try to decrease $T_{\Algo B}(n,k)$
while keeping $W_{\Algo B}(n,k)$ optimal. Second, we can try to
decrease the depth of the search trees without increasing their size.

Regarding the first line of attack, there is often ``little that we
can do'' since $T_{\Algo B}(n,k)$ will often already be optimal. For
instance, the branching algorithm ``pick an
arbitrary edge'' can be implemented optimally in parallel time
$O(1)$ assuming an appropriate memory access model; and for the
branch-on-a-degree-3-vertex algorithm, both finding a degree-3 vertex and
solving the instance if no such vertex exists can be done work-optimally
in polylogarithmic time.

Regarding the second line, however, new algorithmic ideas \emph{are}
possible and lead to work-efficient algorithms whose runtime is
\emph{logarithmic} in the parameter instead of linear. A word of
caution, however, before we proceed: We improve runtimes from $O(k +
\log^{O(1)} n)$ to $O(\log k + \log^{O(1)} n)$, where the $O(\log n)$
is needed already for many pre- and postprocessing operations on the
input. Clearly, the improvement in the runtime is rather modest
since we generally think of $k$ being something very small.
Nevertheless, achieving even this modest
speedup optimally is highly nontrivial for many problems.

To get some intuition for the idea, consider once more the vertex
cover problem, but let us now try to 
find \emph{ten} arbitrary edges that form a matching. Then every
vertex cover of the 
input graph must contain at least one endpoint from each of these ten
edges and we get the following new branching rule: Branch to all 1024
possible ways of choosing one vertex from each of the ten edges, each
time reducing the size of sought vertex cover by 10. This corresponds
to a branching vector $d' = (10,10,\dots,10)$ of length~1024; compared
to the vector $d = (1,1)$ if we branch over a single 
edge. In the sequential setting this idea only complicates
things since $\xi_{d'} = \xi_d = 2$ and this new algorithm
produces a search tree of the same size as before. In contrast, in the
parallel setting we make progress as the \emph{depth} of the search
tree is decreased by a factor of~10, without an increase in the work
being done. Naturally, a factor-10 speedup is just a
constant speedup, but we can extend the idea to move from a runtime
of~$k$ to $\log k$: 

\begin{theorem}\label{thm-vc-log-time}
  There is an algorithm that solves $\PLang{vertex-cover}$
  in time $T(n,k) = O(\log k \cdot \log^3 n)$ and work $W(n,k)
  = O(2^k n)$ on a \textsc{crcw-pram}.
\end{theorem}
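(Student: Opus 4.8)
The plan is to realize the ``branch aggressively on a matching'' idea at full scale, combined with interleaving a parallel Buss-style kernelization, so that the depth of the search tree collapses from $\Theta(k)$ to $O(\log k)$. First I would precompute, using a parallel maximal matching algorithm, a matching $M$ in the input graph $G$. If $|M| > k$, we reject immediately since every vertex cover needs $|M|$ vertices. Otherwise $|M| \le k$, so the at-most-$2k$ matched vertices form a vertex cover of a subgraph and we can discard all isolated vertices, leaving an instance on $O(k)$ vertices (this is essentially the Buss kernel and costs $O(\log n)$ time and $O(n)$ work on a CRCW-PRAM). The key branching step: given a matching $M$ with $|M| = t \le k$ edges, any vertex cover must pick at least one endpoint of each edge of $M$; branch over all $2^t$ ways of choosing one endpoint per edge. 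Crucially, if we always work with a \emph{maximal} matching, its size is between $k/2$ and $k$ in the ``hard'' case, so a single such branching step covers a constant fraction of the $k$ budget, and after one step the residual parameter drops to at most $k - t \le k/2$.

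The second step is to iterate this. After a branching step we have $2^t$ subinstances, each with parameter $k' \le k/2$; on each we recompute a maximal matching and branch again. After $O(\log k)$ rounds the parameter is constant and the instances are decided directly. The total number of leaves is $\prod 2^{t_j}$ where $\sum t_j \le k$ along any root-to-leaf path, so the search tree has $2^k$ leaves and hence $O(2^k)$ nodes overall -- the work is $O(2^k)$ times the per-node work. To keep the per-node work (and the total work) at $O(2^k n)$ rather than something like $2^k \cdot \mathrm{poly}(n)$, I would interleave the kernelization: at every node we first shrink the instance to $O(k^2)$ vertices via Buss-type reduction (remove degree-$>k'$ vertices into the cover, then delete isolated vertices) before computing the matching and branching, so all work inside the tree is on instances of size $\mathrm{poly}(k)$, and only the root-level kernelization touches the full input, contributing the additive $O(n)$ factor — more precisely $O(2^k n)$ since we must also copy/address the instance down the tree. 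Each internal operation (maximal matching on $\mathrm{poly}(k)$ vertices, degree computation, the fan-out of $2^t$ children) runs in $O(\log^2 k)$ or $O(\log k)$ CRCW-PRAM time, and there are $O(\log k)$ levels, giving $O(\log k \cdot \log^2 k)$ parallel time inside the tree plus $O(\log^3 n)$ for the initial kernelization and the final Boolean OR over $2^k$ leaves; this is absorbed into the claimed $O(\log k \cdot \log^3 n)$ bound.

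I expect the main obstacle to be controlling the \emph{work} precisely at $O(2^k n)$ rather than $O(2^k \cdot n^{O(1)})$ or $O(2^k k^{O(1)} + n)$. The naive tree has $2^k$ leaves, and if each of the $2^k$ paths independently re-reads or re-represents an instance of size $O(n)$ we get $2^k n$, which is fine, but any superlinear per-node cost on a non-kernelized instance would blow this up; hence the interleaving is not optional but essential, and one must argue carefully that the kernelization step itself, done once per node, costs only $O(\mathrm{poly}(k))$ work per internal node and $O(n)$ only at the root. A subtler point is the PRAM bookkeeping: spawning $2^t$ child processes and distributing the (small, kernelized) instance to each must be done so that the total work summed over a level is $O(2^k \cdot \mathrm{poly}(k))$ and the time is $O(\log k)$ per level — this uses standard CRCW-PRAM broadcasting and prefix-sum allocation of processors, but it needs to be stated. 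A secondary technical issue is the case analysis in the branching rule: when the maximal matching is small (say $|M| < \varepsilon k$) we are \emph{not} in the hard case and in fact the graph has a vertex cover of size $< 2\varepsilon k$ or the instance is already trivially decided, so the recursion depth argument ``parameter halves each round'' needs the maximality of $M$ to guarantee either $|M| \ge k/2$-type progress or immediate termination; I would phrase this as: if $2|M| \le k$ then the matched vertices are a vertex cover of size $\le k$ and we accept, and otherwise $|M| > k/2$ so branching drops the parameter below $k/2$.
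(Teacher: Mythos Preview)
Your core argument is exactly the paper's: compute a maximal matching~$M$; reject if $|M|>k$, accept if $2|M|\le k$ (the matched vertices are a cover), and otherwise $|M|>k/2$ so branching over the $2^{|M|}$ endpoint choices halves the remaining budget, giving depth $O(\log k)$ and tree size $2^k$. The paper's proof is precisely this, citing Han's maximal-matching algorithm with time $O(\log^3 n)$ and \emph{linear} work, so $O(\log k)$ levels times $O(\log^3 n)$ yields the time bound, and $2^k$ nodes times $O(n)$ work per node yields $O(2^k n)$ directly.

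The one substantive difference is that you believe interleaving a Buss kernel at every node is ``not optional but essential'' to avoid superlinear per-node cost. It is not: because the maximal-matching subroutine already has linear work, the per-node cost on the unkernelized instance is $O(n)$, and $2^k\cdot O(n)=O(2^k n)$ with no kernelization anywhere. Your interleaving does no harm (and would shave the time inside the tree to $\mathrm{polylog}(k)$ rather than $\mathrm{polylog}(n)$ per level, a bound still dominated by the stated $O(\log k\cdot\log^3 n)$), but it is extra machinery the paper's argument does not need.
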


\begin{proof}
  On input $(G,k)$ we determine a \emph{maximal} matching~$M$
  in~$G$. Clearly, if $|M| > k$, then no vertex cover of size~$k$ is
  possible and we can just ouput ``$(G,k) \notin
  \Lang{vertex-cover}$''; and if $|M| \le k/2$, then the endpoints of
  the edges in~$M$ form a vertex cover of size at most 
  $2|M| \le k$, again allowing us to  stop immediately. The interesting
  case is thus $k/2 \le |M| \le k$ and, here, we branch over all
  $2^{|M|}$ possible ways in which we can chose one endpoint from each
  edge. In the worst case, this gives a branching vector $d =
  (k/2,\dots,k/2)$ of length $2^{k/2}$. In particular, in each branching
  step we reduce the target size of the vertex cover by at least 50\%
  and, thus, after at most $O(\log k)$ steps we arrive at a trivial
  instance. The size of the search tree is not affected and, thus, still
  has size $2^k$. Since it is known~\cite{Han1995} that maximal
  matchings can be computed in time $O(\log^3n)$ and linear work, we
  obtain the claim.
  \qed
\end{proof}

The above theorem and its proof transformed a simple ``original'' search 
tree for the vertex cover problem into a ``highly parallel'' one. The
key concepts behind this transformation were the following:
\begin{itemize}
\item
  \emph{Branch structures:}
  The original branching algorithm first found ``a substructure on
  which to branch.'' For example, the vertex cover branching
  algorithm normally finds ``an arbitrary edge;'' the 
  branch-on-degree-at-least-3 algorithm finds ``a high-degree
  vertex.''
\item
  \emph{Conflict-free branch structures:}
  If the original branching algorithm has the choice among several
  possible substructures on which it could branch and if the
  substructures are disjoint, we can also branch on these 
  structures ``in parallel.'' In Theorem~\ref{thm-vc-log-time},
  ``disjoint substructures that are edges'' are matchings and we can
  branch  on them in parallel; for the branch-on-degree-at-least-3
  algorithm we can branch in parallel on any star forest.
\item
  \emph{A large number of conflict-free branch structures:}
  Lastly, we need to be able to find a large enough collection of such
  disjoint substructures quickly and work-efficiently. Its size needs to
  be at least a fraction of the parameter to ensure that we get a
  depth that is logarithmic in the parameter.
\end{itemize}
Since formalizing the above notions can easily lead to rather
technical definitions, we suggest a formalization that is not as
general as it could be, but that nicely captures the essential ideas.
We only consider
\emph{vertex search problems~$Q$} on simple graphs
where the objective is to find a parameter-sized subset of the vertices
that has a certain property.
Concerning branching rules, we only consider rules that identify a
subset of the vertices and then branch over different
ways in which some of these vertices can be added to the partial
solution:

\begin{definition}[Local branching rule]
  Let $Q$ be a vertex search problem. A \emph{local branching rule} is
  a \emph{partial} mapping that gets a tuple as input consisting of a
  graph $G = (V,E)$, a parameter~$k$, an already computed partial
  solution $P \subseteq V$, and a set $S \subseteq V \setminus P$ on which we
  would like to branch. If defined, it outputs a family~$F$ of
  nonempty subsets of~$S$ such that for every solution $Y \supseteq P$ for
  $(G,k)$ the intersection $Y \cap S$ is a superset of an element of $F$.
\end{definition}

The local branching rule for the vertex cover
algorithm maps the tuple $(G,k,P,\{u,v\})$ with $\{u,v\} \in E$ and
$u,v \notin P$ to $\{\{u\}, \{v\}\}$ and is undefined otherwise.
For the branch-on-degree-at-least-3 rule, if $S$ is
the closed neighborhood in $G - P$ of some vertex~$v$ of degree $3$ in
$G-P$, we map $(G,k,P,S)$ to $\{\{v\}, S \setminus \{v\}\}$.
Returning to the three ingredients of the proof of
Theorem~\ref{thm-vc-log-time}, the sets~$S$ in the definition of a
local branching rule are exactly the sought ``branching structures.''
A collection $M$ of such sets is ``conflict-free'' if all members of
$M$ are pairwise disjoint. In the proof of
Theorem~\ref{thm-vc-log-time} such an $M$ was simply a matching in the
graph; but given any collection $N$ of sets~$S$, any maximal set
packing $M \subseteq N$ will be conflict-free.  Maximal set packings
can be obtained efficiently and quickly in parallel by building a
conflict graph over the sets and computing a maximal independent
set \cite{KarpW85}. Therefore, in a general setting it suffices to compute a
polynomial-size set $N$ of sets~$S$ that has a set packing
$M \subseteq N$ whose size at least a fraction
of~$k$. Algorithm~\ref{alg:b1bstar} makes these ideas precise.

\begin{definition}
  An \emph{implementation} of a local branching rule consists of three
  algorithms $\Algo{decide}$, $\Algo{choices}$, and $\Algo{branches}$
  with the following properties:
  \begin{enumerate}
  \item On inputs $(G,k,P)$ for which there is no $S$ such that the
    local branching rule is defined for $(G,k,P,S)$, algorithm
    $\Algo{decide}$ must correctly output ``yes'' or ``no'' depending
    on whether $P$ is a partial solution.
  \item For all other inputs $(G,k,P)$, the algorithm $\Algo{choices}$
    must output a nonempty set $N$ such the local branching rule is
    defined on all $(G,k,P,S)$ for $S \in N$.
  \item For all $(G,k,P,S)$ for which the local branching rule is
    defined, $\Algo{branches}$ must output the corresponding
    family~$F$ of branches.
  \end{enumerate}
\end{definition}

\begin{lstlisting}[
  style=pseudocode-bannachtantauskambath,
  label=alg:b1bstar,
  captionpos=t,
  float=htpb,
  caption={
    For an implementation
    $(\Algo{decide},\Algo{choices},\Algo{branches})$, $\Algo B_1$ is the resulting
    standard branching rule. The new parallel branch
    algorithm $\Algo B_*$ first computes a set packing~$M$ of the
    set~$N$ of possible branch structures and then branches on  
    all of them simultaneously. Let $s$ be the maximum size of any~$X$
    produced in $\Algo B_1$ on any input.
  }
  ]
algorithm $\Algo B_1(G,k,P)$
  if $\Algo{decide}(G,k,P) \in \{\mathrm{yes}, \mathrm{no}\}$ then return $\Algo{decide}(G,k,P)$
  $N$ <- $\Algo{choices}(G,k,P)$  // for vertex cover, $N$ is the edge set of $G-P$
  $S$ <- an arbitrary element of $N$  // for vertex cover, $S = \{u,v\}$ for some edge in $N$
  for each $X \in \Algo{branches}(G,k,P,S)$ par do
    output in parallel $(G,k,P\cup X)$

algorithm $\Algo B_*(G,k,P)$
  if $\Algo{decide}(G,k,P) \in \{\mathrm{yes}, \mathrm{no}\}$ then return $\Algo{decide}(G,k,P)$    // Recursion break
  $N$ <- $\Algo{choices}(G,k,P)$  // for vertex cover, $N$ is the edge set of $G-P$
  $M$ <- a maximal set packing of $N$ among those of size at most $(k-|P|)/(s+1)$
  $\{S_1,\dots,S_m\}$ <- $M$ // name the elements of $M$
  for each $X_1 \in \Algo{branches}(G,k,P,S_1)$, $\dots$, $X_m \in \Algo{branches}(G,k,P,S_m)$ par do
    output in parallel $(G,k,P\cup X_1 \cup \cdots \cup X_m)$
\end{lstlisting}

\begin{theorem}\label{thm-local-log-time}
  Given an implementation
  $(\Algo{decide},\Algo{choices},\Algo{branches})$ for a local
  branching rule for some~$Q$, algorithms $\Algo
  B_1$ and $\Algo B_*$ from Algorithm~\ref{alg:b1bstar} satisfy:
  {\begin{enumerate}
  \item $\Algo{ParSearchTree-B}_*$ is work-competitive to
    $\Algo{SeqSearchTree-B}_1$ if $W_{\Algo{decide}}(n,k) +
    W_{\Algo{choices}}(n,k) + W_{ \Algo{branches}}(n,k) \in\Omega(n^3)$.
  \item If the size of the maximal set packings~$M$ computed by $\Algo
    B_*$ is always at least
    $\varepsilon(k - |P|)$ for some $\varepsilon>0$, then
    $T_{\Algo{ParSearchTree-B}_*}(n,k) = O(\log k \cdot
    (T_{\Algo{decide}}(n,k) + T_{\Algo{choices}}(n,k) + T_{
     \Algo{branches}}(n,k) + \log^4 n))$.
  \end{enumerate}}
\end{theorem}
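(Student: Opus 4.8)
The plan is to analyze the two algorithms $\Algo B_1$ and $\Algo B_*$ in turn, showing first that they traverse search trees of the same (or comparable) \emph{size}, so that work is preserved, and then that the aggressive branching in $\Algo B_*$ collapses the \emph{depth} to $O(\log k)$ under the stated set-packing assumption. For part~1, the crucial observation is that a single application of $\Algo B_*$ at a node with partial solution $P$ branches on $m$ disjoint structures $S_1,\dots,S_m$ \emph{simultaneously}; by the definition of a local branching rule, for any solution $Y\supseteq P$ we have $Y\cap S_j\supseteq X_j$ for some $X_j\in\Algo{branches}(G,k,P,S_j)$, so the $m$-fold product family $\{X_1\cup\cdots\cup X_m\}$ again covers all solutions. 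Hence one $\Algo B_*$-step simulates a connected block of $\Algo B_1$-steps: expanding the product amounts to walking down the $\Algo B_1$ tree, first branching on $S_1$, then independently on $S_2$ in each resulting child, and so on. Since the $S_j$ are pairwise disjoint and each $X_j$ is a nonempty subset of $S_j$, each product leaf $(G,k,P\cup X_1\cup\cdots\cup X_m)$ is reachable as a leaf of the corresponding $\Algo B_1$-subtree, and distinct products give distinct $\Algo B_1$-leaves. Therefore $\operatorname{size}_{\Algo B_*}(n,k)\le\operatorname{size}_{\Algo B_1}(n,k)$ (the $\Algo B_*$ tree is a ``contraction'' of the $\Algo B_1$ tree: its nodes are a subset of the $\Algo B_1$ nodes), and by Theorem~\ref{thm-branch-simple} the work is bounded by $O(\operatorname{size}_{\Algo B_1}(n,k)\cdot W_{\Algo B_*}(n,k))$. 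Here one extra subtlety arises: a single $\Algo B_*$ node does more work than a single $\Algo B_1$ node because it must additionally compute a maximal set packing~$M$ of~$N$. I would absorb this: building the conflict graph over the at-most-polynomially-many sets in~$N$ and computing a maximal independent set costs work polynomial in $|N|$ and hence in~$n$; choosing the exponent so that this is $O(n^3)$ and invoking the hypothesis $W_{\Algo{decide}}+W_{\Algo{choices}}+W_{\Algo{branches}}\in\Omega(n^3)$ shows the packing computation is dominated by the per-node work already accounted for. This yields work-competitiveness of $\Algo{ParSearchTree-B}_*$ to $\Algo{SeqSearchTree-B}_1$.

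For part~2, I would track how the quantity $k-|P|$ shrinks along any root-to-leaf path of the $\Algo B_*$ tree. At a node with current ``budget'' $b=k-|P|$ for which $\Algo{decide}$ does not halt, $\Algo B_*$ computes a maximal set packing $M=\{S_1,\dots,S_m\}$ among those of size at most $b/(s+1)$, and by hypothesis $m\ge\varepsilon b$ (intersecting the promised lower bound with the cap imposed in the pseudocode — I would note the cap is compatible with the $\Omega(b)$ guarantee, or adjust constants accordingly). Each branch adds $X_1\cup\cdots\cup X_m$ to $P$; since the $X_j\subseteq S_j$ are nonempty and the $S_j$ are disjoint, $|X_1\cup\cdots\cup X_m|\ge m\ge\varepsilon b$, so the new budget is at most $b-\varepsilon b=(1-\varepsilon)b$. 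Thus the budget decreases geometrically by a constant factor at every level, so after $O(\log k)$ levels it drops below any fixed constant and $\Algo{decide}$ must halt; hence $\operatorname{depth}_{\Algo B_*}(n,k)=O(\log k)$. The size cap $b/(s+1)$ is what guarantees $|P|$ never overshoots: each $S_j$ has size at most $s+1$ (it feeds $\Algo{branches}$, whose outputs $X$ have size at most $s$, and $S_j$ can be one larger, as in the closed-neighborhood example), so $|X_1\cup\cdots\cup X_m|\le m(s+1)\le b$, keeping $|P|\le k$ and all recursive instances well-formed. Plugging $\operatorname{depth}_{\Algo B_*}(n,k)=O(\log k)$ into the parallel-time line of Theorem~\ref{thm-branch-simple} gives $T_{\Algo{ParSearchTree-B}_*}(n,k)=O(\log k\cdot T_{\Algo B_*}(n,k))$, and $T_{\Algo B_*}(n,k)$ is the time for one $\Algo B_*$-step: running $\Algo{decide}$, $\Algo{choices}$, $\Algo{branches}$, plus the maximal set packing. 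The latter is a maximal independent set computation on a polynomial-size graph, which by \cite{KarpW85} (and standard \textsc{crcw-pram} MIS results) takes $O(\log^{c} n)$ time; I would quote the bound so that it comes out as $O(\log^4 n)$, matching the statement. Summing, $T_{\Algo B_*}(n,k)=O(T_{\Algo{decide}}(n,k)+T_{\Algo{choices}}(n,k)+T_{\Algo{branches}}(n,k)+\log^4 n)$, and multiplying by the $O(\log k)$ depth gives exactly the claimed bound.

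The step I expect to be the main obstacle is making the ``one $\Algo B_*$-step = a block of $\Algo B_1$-steps'' correspondence watertight enough to transfer the size bound, because it is tempting but slightly imprecise: the product family $\{X_1\cup\cdots\cup X_m\}$ has $\prod_j|\Algo{branches}(G,k,P,S_j)|$ members, and I must argue that these correspond injectively to leaves of a depth-$m$ slice of the $\Algo B_1$ tree that branches on $S_1,\dots,S_m$ in sequence — which requires knowing that $\Algo B_1$, when run on the child instances, would indeed pick these same structures (it need not!). The clean fix is to compare against the \emph{best possible} $\Algo B_1$-style tree rather than the literal run of $\Algo B_1$: both $\Algo B_1$ and $\Algo B_*$ are valid branching algorithms for the same recurrence, so the Niedermeier–Rossmanith machinery (quoted in the excerpt) bounds $\operatorname{size}_{\Algo B_*}(n,k)=O(\xi^k)$ with the \emph{same} branching number $\xi$ that governs $\operatorname{size}_{\Algo B_1}$, because the $\Algo B_*$ branching vector $(\,\underbrace{|X_1\cup\cdots\cup X_m|,\dots}_{}\,)$ is a ``composition'' of $\Algo B_1$-vectors and composing branching vectors never increases the branching number. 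Spelling out this composition-of-branching-vectors argument — ideally via the characteristic-polynomial characterization $\xi_d^{-1}$ = minimal root of $1-\sum_i x^{d_i}$ — is the technical heart of part~1; everything else is bookkeeping with the $\Omega(n^3)$ hypothesis and the parallel MIS routine.
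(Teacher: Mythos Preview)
Your proposal is correct and tracks the paper's proof closely for part~2 (the geometric shrinking of the budget $k-|P|$ by a factor $1-\varepsilon$ gives depth $O(\log k)$, and the Karp--Wigderson maximal independent set routine supplies both the $O(\log^4 n)$ time term and the polynomial work for the packing). For part~1 the paper takes a more direct line than you do: it simply asserts that the $\Algo B_*$ tree and the $\Algo B_1$ tree have \emph{the same} size, because one $\Algo B_*$ node branching simultaneously on $S_1,\dots,S_m$ is exactly what $\Algo B_1$ would produce if it branched on these same sets one after another---recall that in $\Algo B_1$ the set $S$ is picked \emph{arbitrarily} from~$N$, so the worst-case size already covers this particular sequence of choices. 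The paper does not engage with your worry that $S_2$ might fail to lie in $\Algo{choices}(G,k,P\cup X_1)$ after the first branching; your composition-of-branching-vectors argument via the characteristic polynomial is a legitimate, more careful alternative, but it is not what the paper does. What the paper \emph{does} spend its effort on, and what you treat only in passing, is explaining---with a concrete matching counterexample---why the size cap $|M|\le (k-|P|)/(s+1)$ is essential for the tree-size correspondence in part~1: without the cap, $\Algo B_*$ could branch in $2^{1000}$ ways on an instance like $(G,2,\emptyset)$ with $G$ a size-$1000$ matching, while $\Algo B_1$ dismisses it after a single step. You invoke the cap only to keep $|P|\le k$; the paper's point is that the cap is what makes the two trees genuinely the same size rather than merely both bounded by $\xi_D^k$.
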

\begin{proof}
  To see that $\Algo{ParSearchTree-B}_*$ is work-competitive to
  $\Algo{SeqSearchTree-B}_1$ first note that both algorithms produce
  search trees of the same size (albeit different depths) since each
  parallel branching done by $\Algo B_*$ over $S_1,\dots,S_m$
  corresponds to a sequential branching of $\Algo B_1$ over the same
  sets in an arbitrary order. At this point it is important that in
  $\Algo B_*$ we restrict the size of $M$ to $(k-|P|)/(s+1)
  <(k-|P|)/s$ -- otherwise, $\Algo{SeqSearchTree-B}_1$ might
  ``immediately notice after one branching'' that an input like
  $(G,2,\emptyset)$ does not have a solution, while $\Algo B_*$ might
  output a huge set~$M$ and would then branch in a great number of
  ways, only to notice immediately in each branch that no solution
  results. For instance, if $G$ is a size-1000 matching, then
  $\Algo{SeqSearchTree-B}_1$ would notice after one branching that
  $(G,2,\emptyset)$ has no solution, while an unrestricted maximal
  matching in~$G$ obviously has size $1000$ and
  $\Algo{ParSearchTree-B}_*$ would branch in $2^{1000}$ ways, each
  time immediately noticing that the solution is 998 vertices too
  large.

  Additional work inside the algorithm $\Algo B_*$ is caused by the
  need to compute a maximum set packing. This can be done by
  constructing a conflict graph, which requires work $O(n^3)$, and
  then applying the parallel maximal independent set algorithm by
  Karp and Wigderson~\cite{KarpW85}, which requires work $O(n^2)$. 

  Concerning the runtime, note that by assumption each time $\Algo
  B_*$ calls itself recursively, the size of $k - |P|$ is shrunk by at
  least a factor of $\varepsilon$. Thus, starting with $P =
  \emptyset$, after $O(\log k)$ rounds we will have $|P| = k$ and no
  further branching will happen. This immediately gives us the claimed
  runtime since computing maximal set packings can be done in time
  $O(\log^4 n)$, see~\cite{KarpW85}. 
  \qed
\end{proof}

\section{Work-Efficient Parallel Kernels}\label{sec:kernels}

Kernels are self-reductions that map instances to new
instances whose size is bounded in terms of the parameter.
Like search trees, they are basic concepts of fpt theory.
Unlike search trees, kernels are often hard to parallelize: They are
typically described in terms of \emph{reduction rules,} which
locally change an input instance in such a way that it gets a bit
smaller without changing problem membership and such that at least
one rule is still applicable as long as the instance size is not
bounded in terms of the parameter. Unfortunately, it is known that some
sets of reduction rules are ``inherently sequential,'' meaning that
computing the result of applying them exhaustively is complete for
sequential polynomial time~\cite{BannachT2018b}. On the other hand,
some reduction rules can easily be applied in parallel just as well as
sequentially, leading to kernelization algorithms running in
polylogarithmic time or even in constant time~\cite{BannachT2018a}.

While it seems hard to characterize which sets of reduction rules 
yield parallel kernels, the situation is
more favorable when we consider a sequence of kernels (a
\emph{kernel cascade}). In the sequential setting, the situation is
simple: Given several kernelizations for the same problem, the
asymptotically fastest way to compute a minimum-size kernel is simply
to apply them in sequence starting with the fastest and ending
with the slowest. In the parallel setting, the situation is also
simple when we can parallelize all kernels of a cascade
optimally. However, even when this is not the case, we may still 
get a fast parallel algorithm and there is an intriguing dependence on
the parallel runtime and the kernel size: Theorem~\ref{thm-parallel-kernels} states
that it suffices to parallelize the kernels in a cascade until the
\emph{kernel size} equals the desired \emph{parallel runtime} -- while
later kernels 
need not be parallelized. 

\subsection{Sequential Kernel Cascades}

A \emph{kernelization} for a parameterized problem $Q \subseteq \Sigma^*
\times \mathbb N$ is a polynomial-time computable function $K \colon \Sigma^* \times \mathbb N
\to \Sigma^* \times \mathbb N$ such that (a)
$(x,k) \in Q$ if, and only if,  $K(x,k) \in Q$ for the
\emph{kernel $K(x,k)$} and such that (b) for some computable
function~$s_K$ we  
have $|K(x,k)| \le s_K(k)$ for all $x$ and~$k$. We
call the kernelization \emph{polynomial} if $s_K$ is a
polynomial.  A \emph{kernel algorithm} is an algorithm~$\Algo K$ that
computes a kernelization~$K$.

As indicated earlier, there can be several kernelizations (and, hence,
kernel algorithms) for the
same problem and they may differ regarding their runtime and their kernel
sizes. For instance, on input $(G,k)$ the \emph{Buss kernelization} of
the vertex cover problem removes all vertices of degree larger
than~$k$ (which must be in a vertex cover) and then removes all
isolated vertices (which are not 
needed for a vertex cover). It yields kernels of size
$s_{\Algo{Buss}}(k) = k^2$ and can be computed very quickly. In
contrast, the \emph{linear program kernelization}~\cite{ChenKJ2001}
for the vertex cover 
problem solves a linear program in order to compute a
kernel of size~$2k$, but solving the linear program takes more time.
It now makes sense to \emph{first} compute a Buss kernel 
\emph{followed} by an application of the linear program kernelization since
we then apply a ``slow'' algorithm only to an already reduced
input size (from originally $n$ to only $k^2$).

In general, let a \emph{kernel cascade}
be a sequence $C = (\Algo K_1, \dots, \Algo K_t)$ of
kernel algorithms for the same parameterized problem~$Q$
sorted in strictly increasing order of runtime (that is, we require
$T_{\Algo   K_i} \in o(T_{\Algo K_{i+1}})$ and thereby implicitly
rule out situations where runtimes are incomparable) and strictly
decreasing order of kernel sizes (that is, we require $s_{\Algo K_i}
(k) > s_{\Algo K_{i+1}}(k)$ for all but finitely many~$k$). The
\emph{cascaded kernel algorithm~$\Algo K_C$ of a cascade~$C$} will, on
input $(x,k)$, first apply $\Algo K_1$ to $(x,k)$, then applies $\Algo
K_2$ to the result, then $\Algo K_3$ and so on, and output the result
of the last $\Algo K_t$. Clearly, the following holds: 

\begin{observation}\label{obs-optimal-kernel-cascade}
  Let $C = (\Algo K_1, \dots, \Algo K_t)$ be an kernel
  cascade. Then $s_{\Algo K_C} = s_{\Algo K_t}$ and the runtime of
  $\Algo K_C$ is
  \(
    T_{\Algo K_C}(n,k) = T_{\Algo K_1}(n) + T_{\Algo K_2}(s_{\Algo
    K_1}(k)) + T_{\Algo K_3}(s_{\Algo K_2}(k)) + \cdots +
    T_{\Algo K_t}(s_{\Algo K_{t-1}}(k)).
    \)
  Furthermore, no subsequence~$C'$ of~$C$ with $s_{\Algo
    K_{C'}} = s_{\Algo K_t}$ achieves an asymptotically faster
  runtime. 
\end{observation}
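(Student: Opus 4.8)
The plan is to establish the three assertions in turn; only the concluding optimality statement requires real work. For the kernel size, note that $\Algo K_C$ on input $(x,k)$ outputs exactly $\Algo K_t$'s output on the instance $(x',k')$ produced by the first $t-1$ kernelizations. As usual for kernelizations the parameter does not grow, so $k' \le k$, and hence $|K_C(x,k)| = |K_t(x',k')| \le s_{\Algo K_t}(k') \le s_{\Algo K_t}(k)$ by monotonicity of $s_{\Algo K_t}$; since this bound is attained on inputs whose final kernel is of worst-case size, $s_{\Algo K_C} = s_{\Algo K_t}$. For the runtime formula I would unroll the definition of $\Algo K_C$: phase~$1$ runs $\Algo K_1$ on the length-$n$ input at cost $T_{\Algo K_1}(n)$ and leaves an instance of size at most $s_{\Algo K_1}(k)$, and inductively phase~$i$ runs $\Algo K_i$ on an instance of size at most $s_{\Algo K_{i-1}}(k)$, costing at most $T_{\Algo K_i}(s_{\Algo K_{i-1}}(k))$ by monotonicity of $T_{\Algo K_i}$ and again leaving an instance of size at most $s_{\Algo K_i}(k)$. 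Summing the $t$ phases gives the stated expression, which is tight on inputs on which every intermediate kernel has worst-case size.

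For the optimality part I would first argue that the last element of any subsequence $C' = (\Algo K_{i_1},\dots,\Algo K_{i_r})$ of $C$ satisfying $s_{\Algo K_{C'}} = s_{\Algo K_t}$ must be $\Algo K_t$ itself: applying the size identity just proved to $C'$ gives $s_{\Algo K_{C'}} = s_{\Algo K_{i_r}}$, and since the $s_{\Algo K_i}$ are strictly decreasing in $i$ and hence pairwise distinct for all large~$k$, the equation $s_{\Algo K_{i_r}} = s_{\Algo K_t}$ forces $i_r = t$. I would also restrict attention to inputs with $n \ge s_{\Algo K_1}(k)$, which costs no generality for an asymptotic comparison: for smaller~$n$ the input is already of kernel size and one may harmlessly prepend an identity kernelization to~$C'$.

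The core step is then to charge each of the (at most~$t$) summands of $T_{\Algo K_C}$ to a summand of $T_{\Algo K_{C'}}$ that is at least as large, up to the eventual inequalities coming from the $o$-relations. The summand $T_{\Algo K_1}(n)$ is charged to the summand $T_{\Algo K_{i_1}}(n)$ of~$C'$, which dominates it since $i_1 \ge 1$. For a summand $T_{\Algo K_i}(s_{\Algo K_{i-1}}(k))$ with $i \ge 2$, let $j$ be least with $i_j \ge i$ and charge it to the $j$-th summand of~$C'$, which is $T_{\Algo K_{i_j}}(s_{\Algo K_{i_{j-1}}}(k))$ if $j \ge 2$ and $T_{\Algo K_{i_1}}(n)$ if $j = 1$; in either case $i_j \ge i$ gives $T_{\Algo K_{i_j}} \ge T_{\Algo K_i}$ pointwise for large arguments, while minimality of~$j$ gives $i_{j-1} \le i-1$ (when $j \ge 2$) and hence an argument $s_{\Algo K_{i_{j-1}}}(k) \ge s_{\Algo K_{i-1}}(k)$, respectively an argument $n \ge s_{\Algo K_1}(k) \ge s_{\Algo K_{i-1}}(k)$ (when $j = 1$), so that monotonicity of $T_{\Algo K_i}$ finishes the comparison. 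Since~$C$ has only $t$ summands this yields $T_{\Algo K_C}(n,k) \le t \cdot T_{\Algo K_{C'}}(n,k)$ for all large $n$ and~$k$, i.e.\ $T_{\Algo K_{C'}} \in \Omega(T_{\Algo K_C})$, so $C'$ cannot be asymptotically faster than~$C$.

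I expect the bookkeeping in this last step to be the main obstacle, in particular the boundary case in which $C'$ drops an entire prefix $\Algo K_1,\dots,\Algo K_{i_1-1}$ of the cascade so that a fast kernel is run on the raw length-$n$ input rather than on an already-shrunk instance; this is exactly the case that the normalization $n \ge s_{\Algo K_1}(k)$ is designed to neutralise. A second, minor point is that the strict $o$-relations between consecutive $T_{\Algo K_i}$ must be turned into eventual pointwise inequalities, which is legitimate because every argument occurring above---the input size~$n$ and the kernel sizes $s_{\Algo K_i}(k)$---grows without bound.
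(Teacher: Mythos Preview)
The paper provides no proof of this observation at all; it is introduced with ``Clearly, the following holds'' and left at that. Your argument is therefore strictly more than what the paper offers, and the charging scheme you give for the optimality claim is sound: forcing $i_r = t$, normalising to $n \ge s_{\Algo K_1}(k)$, and then mapping each summand of $T_{\Algo K_C}$ to a dominating summand of $T_{\Algo K_{C'}}$ via the least $j$ with $i_j \ge i$ does the job, with the factor~$t$ absorbing multiple charges to the same target.

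One small caveat worth flagging: you invoke ``As usual for kernelizations the parameter does not grow,'' but the paper's definition of a kernelization only bounds $|K(x,k)| \le s_K(k)$ and does not explicitly require $k' \le k$. The observation itself already tacitly assumes this (the runtime formula writes $T_{\Algo K_i}(s_{\Algo K_{i-1}}(k))$ with the \emph{original} parameter~$k$ throughout), so this is really an omission in the paper's setup rather than in your proof; still, it would be cleanest to state the assumption once rather than appeal to folklore.
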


\subsection{Parallel Kernel Cascades}

Faced with the problem that kernels based on reduction rules are often
difficult to parallelize, parallelizing a whole kernel cascade in a
work-optimal way seems even more challenging:
Observation~\ref{obs-optimal-kernel-cascade} states that for a given 
cascade the asymptotically fastest runtime is achieved by applying all
kernels in the cascade in sequence. Since ``work optimal'' means, by
definition, ``parallel work equal to the fastest sequential runtime,''
we also must apply work-optimal parallel versions of all kernels in
the cascade in sequence in the parallel setting.

It turns out that it may \emph{not} be necessary to
parallelize all kernels in a cascade: Suppose we only
parallelize the first kernel in a cascade, that is, suppose we find a
work-optimal algorithm for $\Algo K_1$ with runtime $O(\log^{O(1)} n)$
and then apply this parallel algorithm \emph{followed by the
  unchanged sequential kernels $\Algo K_2$ to~$\Algo K_t$.} The work
of the resulting cascade will be identical to the runtime of the
original sequential cascade (since $\Algo K_1$ is work-optimal and
nothing else is changed). The runtime, however, will now be
$O(\log^{O(1)} n)$ plus some function \emph{that depends only on~$k$}
(since all later kernels are applied to inputs whose size depends only
on~$k$). Assuming that we consider a runtime of the form
$O(\log^{O(1)} n + f(k))$ ``acceptable,'' we see that we can turn any
sequential kernel cascade into a parallel one by parallelizing only
the \emph{first} kernel.
Of course, there are functions~$f$ that we might not consider
``acceptable''; for instance, $f$ might be exponential. Intuitively, we
then need to ``parallelize more kernels of the sequence.''

\begin{theorem}\label{thm-parallel-kernels}
  Let $C = (\Algo K_1, \dots, \Algo K_t)$ be a kernel
  cascade and for some $r \le t$ let $\Algo K'_1,\dots, \Algo K'_r$ be
  parallel implementations of $\Algo K_1,\dots, \Algo K_r$, that is,
  for $i \in \{1,\dots,r\}$ let $\Algo K'_i$ be a work-competitive
  parallel implementation of~$\Algo K_i$ with runtime $T_{\Algo K'_i}
  \in O(\log^{O(1)} n)$. Let $C' = (\Algo K'_1, \dots, \Algo K'_r,
  \Algo K_{r+1},\dots, \Algo K_t)$. Then
  {\begin{enumerate}
  \item $\Algo K_{C'}$ is work-competitive to $\Algo K_C$ and
  \item $T_{\Algo K_{C'}}(n,k) = \log^{O(1)} n + s_{\Algo K'_r}(k)^{O(1)}$.
  \end{enumerate}}
\end{theorem}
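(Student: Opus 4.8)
The plan is to verify the two claims essentially by bookkeeping, building directly on Observation~\ref{obs-optimal-kernel-cascade}. First I would write out the structure of $\Algo K_{C'}$: on input $(x,k)$ it runs the parallel algorithms $\Algo K'_1,\dots,\Algo K'_r$ in sequence (so the output of $\Algo K'_i$ is fed into $\Algo K'_{i+1}$), and then applies the unchanged sequential algorithms $\Algo K_{r+1},\dots,\Algo K_t$ to the resulting instance. The key structural point, which I would state explicitly, is that since each $\Algo K'_i$ computes the \emph{same function} as $\Algo K_i$ (it is an implementation of it), the instance passed from one stage to the next in $C'$ is bit-for-bit identical to the corresponding instance in the sequential cascade $C$. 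In particular $\Algo K_{C'}$ and $\Algo K_C$ compute the same kernelization, so $s_{\Algo K_{C'}} = s_{\Algo K_C} = s_{\Algo K_t}$, and moreover for $i > r$ the algorithm $\Algo K_i$ inside $C'$ is applied to an instance of size at most $s_{\Algo K_{i-1}}(k)$, exactly as in Observation~\ref{obs-optimal-kernel-cascade}.

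For claim~(1), work-competitiveness, I would add up the work of the stages. For $i \le r$, the work of $\Algo K'_i$ on its input is $O(T_{\Algo K_i}(\cdot))$ by the assumption that $\Algo K'_i$ is work-competitive to $\Algo K_i$, where the argument is the size of the instance entering stage~$i$; for $i > r$ the stage is the sequential $\Algo K_i$ whose work equals its runtime $T_{\Algo K_i}(\cdot)$. Summing over all stages and matching each term against the corresponding term in the expression for $T_{\Algo K_C}(n,k)$ from Observation~\ref{obs-optimal-kernel-cascade} gives $W_{\Algo K_{C'}}(n,k) = O(T_{\Algo K_C}(n,k))$, i.e. $\Algo K_{C'}$ is work-competitive to $\Algo K_C$. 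Here I should be slightly careful that ``work-competitive'' as defined allows the constant to be absorbed per stage and there are only $t = O(1)$ stages, so the finitely many constants combine into one.

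For claim~(2), the runtime, I would split the running time of $\Algo K_{C'}$ into the parallel prefix and the sequential suffix. The prefix $\Algo K'_1,\dots,\Algo K'_r$ runs in total time $\sum_{i=1}^r T_{\Algo K'_i}(\cdot) = O(\log^{O(1)} n)$, since each $T_{\Algo K'_i}$ is polylogarithmic in the size of its input, and each such input has size at most $n$ (using $|x_i|\le|x|$, i.e. kernels do not increase instance size; if one prefers not to assume monotone non-increase, the input to stage~$i$ has size at most $\max(n, s_{\Algo K_1}(k), \dots)$, which is still $O(n + \mathrm{poly}(k))$ and whose polylog is absorbed). The suffix consists of $\Algo K_{r+1},\dots,\Algo K_t$, each run sequentially on an instance whose size is bounded by $s_{\Algo K_r}(k) \ge s_{\Algo K_{r+1}}(k) \ge \cdots$ (the kernel sizes are strictly decreasing along the cascade), so each of these stages runs in time at most $T_{\Algo K_i}(s_{\Algo K_r}(k))$, which is some polynomial in $s_{\Algo K_r}(k)$ since every kernel algorithm is polynomial-time. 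Because $\Algo K'_r$ and $\Algo K_r$ compute the same function, $s_{\Algo K'_r} = s_{\Algo K_r}$, so the suffix costs $s_{\Algo K'_r}(k)^{O(1)}$, and adding the two pieces yields $T_{\Algo K_{C'}}(n,k) = \log^{O(1)} n + s_{\Algo K'_r}(k)^{O(1)}$.

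The one genuinely delicate point — and where I expect the main obstacle to lie — is making precise that the intermediate instances along $C'$ really are identical to those along $C$, and hence that the size bounds used in the runtime estimate (in particular $s_{\Algo K'_r}(k)$ as the bound on the input to stage $r+1$) are legitimate. This is where we use that an ``implementation'' computes exactly the kernelization function rather than merely some membership-equivalent size-bounded map; without that, the sizes of the later instances would no longer be controlled by the $s_{\Algo K_i}$ of the original cascade, and the telescoping argument from Observation~\ref{obs-optimal-kernel-cascade} would not transfer. Everything else is routine accumulation of $O(1)$-many polylog and polynomial terms.
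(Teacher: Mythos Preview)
Your proposal is correct and follows essentially the same approach as the paper: split $C'$ into the parallel prefix $(\Algo K'_1,\dots,\Algo K'_r)$ and the unchanged sequential suffix, then bound work by matching each stage against the corresponding term in Observation~\ref{obs-optimal-kernel-cascade}, and bound time as polylog for the prefix plus a polynomial in $s_{\Algo K'_r}(k)$ for the suffix. The only minor difference is that where you insist on ``implementation'' meaning bit-for-bit identical outputs (so intermediate instances coincide exactly), the paper argues with the slightly weaker hypothesis $s_{\Algo K'_i}\in O(s_{\Algo K_i})$; either reading suffices for the telescoping, and the overall argument is the same.
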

\begin{proof}
  Consider arbitrary kernel cascades $C$ and $C'$ defined as
  above. Since $\Algo K'_i$ is work-competitive to $\Algo K_i$ and they
  have asymptotically the same kernel size $s_{K'_i}\in O(s_{K_i})$ for
  every $i\in\{1,\dots,r\}$, it follows by
  Observation~\ref{obs-optimal-kernel-cascade} that the cascade
  $C'_\Delta=(K'_1,\dots,K'_r)$ is work-competitive to the cascade
  $C_\Delta=(K_1,\dots,K_r)$. Furthermore, $s_{C'_\Delta}\in O(s_{C_\Delta})$.
  In both cascades $C$ and $C'$, the remaining phases after computing a
  similar-sized kernel by $C_\Delta$ and $C'_\Delta$ are equal. Since
  in both cases the input for this phase has similar size, we get that
  $C'$ is work-competitive to~$C$.

  In~$C'$, for every $i\leq r$ the kernel $K'_i$ needs parallel time
  $\smash{T_{K'_i}(n)=\log^{O(1)}n}$. It follows directly that $T_{C'_\Delta}(n)\in
  O(\log^{O(1)} n)$. The output of the first stage has size at most
  $s_{K'_r}(k)$. The dominating work in the remaining phase of the
  cascade $C'$ is the polynomial work of the last kernel
  algorithm. Since this work is polynomial in $n$, we directly get a
  maximal work of $s_{K'_r}(k)^{O(1)}$, which completes the proof.
  \qed
\end{proof}

As a concluding example, consider once more $\PLang{vertex-cover}$. We
mentioned already that there is a size-$k^2$ kernel algorithm
$\Algo{Buss}$ for this problem, which is 
easy to implement in linear sequential time, but also 
in logarithmic parallel time and linear work (and,
thus, optimally). There is also a size-$2k$ kernel algorithm
$\Algo{LP}$ based on~\cite{ChenKJ2001} that needs sequential time
$O(|E|\sqrt{|V|})$. For this kernel, no work-optimal (deterministic)
polylogarithmic time implementation is known (indeed, \emph{any}
parallel implementation is difficult to achieve~\cite{BannachT2018b}).  
By Observation~\ref{obs-optimal-kernel-cascade}, there is a
\emph{sequential} kernel algorithm for the vertex cover problem that
runs in time $O\bigl(n + k^2 \sqrt{k^2}\bigr) = O(n+k^3)$. By
Theorem~\ref{thm-parallel-kernels}, there is a \emph{parallel} kernel
algorithm that is work-competitive and needs time $O(\log n + k^3)$.  

\section{Work-Efficient Parallel Interleaving}\label{sec:interleaving}

\emph{Interleaving} is a method to combine a branching
algorithm~$\Algo B$ and kernel algorithm~$\Algo K$ 
to ``automatically'' reduce the runtime of
$\Algo{SeqSearchTree-B}$: During the recursion, the algorithm
$\Algo{SeqInterleave-B-K}$ applies $\Algo K$ at the beginning of
each recursive call (thus, calls to the kernel algorithm are
``interleaved'' with the recursive calls, hence the name of the
method). Intuitively, at the start of the recursion, calling a kernel
algorithm is superfluous (the input is typically already
kernelized) and only adds to the runtime, but deeper in the recursion
it will ensure that the inputs are kept small. Since the bulk of
all calls are ``deep inside the recursion'' we can hope that ``keeping
things small there'' has more of a positive effect than the negative
effect caused by the superfluous calls at the beginning. Niedermeier
and Rossmanith have shown that this intuition is correct:

\begin{fact}[\cite{NiedermeierR2000}]\label{fact-seq-interleaving}
  Let $\Algo K$ be an arbitrary kernel algorithm that
  produces kernels of polynomial size. Let $\Algo B$ be a
  $d$-branching algorithm running in polynomial time. Then
  \(
    T_{\Algo{SeqInterleave-B-K}}(n,k)= \operatorname{size}_{\Algo
    B}(n,k) + n^{O(1)} \le \xi_d^k + n^{O(1)}.
  \)
\end{fact}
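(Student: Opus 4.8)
The plan is to analyze the cost of $\Algo{SeqInterleave-B-K}$ by charging, to each node of the search tree traversed by $\Algo{SeqSearchTree-B}$ on input $(x,k)$, the work done by the kernel algorithm $\Algo K$ and the branching algorithm $\Algo B$ at the corresponding recursive call, but now with the crucial observation that the \emph{input} to that call has already been shrunk to a kernel. First I would fix a node $v$ of the search tree at depth $j$, and note that the instance reaching $v$ before interleaving is some $(x_v,k_v)$ with $k_v \le k$ and $|x_v| \le |x| = n$. After $\Algo{SeqInterleave-B-K}$ applies $\Algo K$ at the start of that call, the instance becomes a kernel of size at most $s_{\Algo K}(k_v) \le s_{\Algo K}(k)$, which is $k^{O(1)}$ by the polynomial-kernel assumption. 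Hence the subsequent call to $\Algo B$ at node $v$, as well as the computation of $\Algo K$ itself, runs on an instance of size $\max\{n, k^{O(1)}\}$ at the root but of size $k^{O(1)}$ at every non-root node, and since $\Algo B$ and $\Algo K$ are polynomial-time, each such call costs $(k^{O(1)})^{O(1)} = k^{O(1)}$ time — a quantity that does not depend on $n$.

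Next I would sum over the whole tree. There are $\operatorname{size}_{\Algo B}(n,k)$ nodes; but here I should be slightly careful, because interleaving changes the instances and hence \emph{could} change the tree. The point to make is that running $\Algo K$ on an instance does not change problem membership and, because the branching algorithm's behaviour depends only on the instance it is handed, the interleaved recursion explores a tree that is no larger than (in fact, a ``kernelized image'' of) the original one: each $k_v$ in the interleaved run is at most the corresponding $k_v$ in the non-interleaved run, so $\operatorname{size}$ only goes down. Thus the total cost is the cost at the root — a single application of $\Algo K$ to $(x,k)$ in time $n^{O(1)}$, followed by a single application of $\Algo B$ to a kernel of size $k^{O(1)}$ in time $k^{O(1)}$ — plus $O(\operatorname{size}_{\Algo B}(n,k))$ many internal nodes each costing $k^{O(1)}$, plus the polynomial reconstruction/combination overhead. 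Absorbing the per-node factor $k^{O(1)}$ into the $n^{O(1)}$ additive term (legitimate since $k \le n$, or more precisely since $\operatorname{size}_{\Algo B}(n,k) \cdot k^{O(1)} \le \operatorname{size}_{\Algo B}(n,k)^{O(1)} + n^{O(1)}$ in the usual $\Class{FPT}$ reading where $\operatorname{size}$ is itself at most $\xi_d^k$), one obtains
\[
  T_{\Algo{SeqInterleave-B-K}}(n,k) = \operatorname{size}_{\Algo B}(n,k) + n^{O(1)},
\]
and the final bound $\operatorname{size}_{\Algo B}(n,k) \le \xi_d^k$ is exactly the known estimate on search-tree size for $d$-branching algorithms recalled earlier in the paper (Section~\ref{sec:search-trees}).

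The main obstacle, and the step I would spend the most care on, is the bookkeeping that lets us replace the naive bound $\operatorname{size}_{\Algo B}(n,k) \cdot (\text{poly in node-input-size})$ by $\operatorname{size}_{\Algo B}(n,k) + n^{O(1)}$: one must argue that exactly \emph{one} node (the root) is handed an instance of size depending on $n$, and every other node is handed an instance of size $k^{O(1)}$. This is the whole content of ``interleaving works'', and it relies on two facts that need to be stated precisely — that $\Algo K$ reduces size to $s_{\Algo K}(k') \le k^{O(1)}$ independently of the current instance size, and that a child instance's parameter $k_i = k - d_i$ is strictly smaller, so that once the root's kernel has been computed, the parameter bounding all descendant instance-sizes can never grow back above $k$. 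A secondary subtlety is making sure the modified tree is genuinely no larger than the original; this follows because $\Algo B$ is a $d$-branching algorithm whose recursion tree depends only on the instances it receives, and $\Algo K$ never increases $k$.
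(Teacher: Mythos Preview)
First, note that the paper does not actually prove this statement: it is recorded as a \emph{Fact} cited from~\cite{NiedermeierR2000}, so there is no ``paper's own proof'' to compare against directly. That said, the paper does reproduce the relevant recurrence machinery in the proof of Theorem~\ref{theorem:shallowInterleaving}, and that is what your argument should be measured against.

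Your proposal has a genuine gap at the absorption step. You correctly observe that every non-root node receives an instance of size $k^{O(1)}$ and hence costs $k^{O(1)}$ time, yielding a total of $\operatorname{size}_{\Algo B}(n,k)\cdot k^{O(1)} + n^{O(1)}$. But the claim to be proved is $\operatorname{size}_{\Algo B}(n,k) + n^{O(1)} \le \xi_d^k + n^{O(1)}$, with \emph{no} polynomial factor in front of $\xi_d^k$. Your attempt to ``absorb'' the per-node $k^{O(1)}$ factor does not work: $\xi_d^k \cdot k^{c}$ is \emph{not} $O(\xi_d^k)$, and writing it as $\operatorname{size}_{\Algo B}(n,k)^{O(1)}$ changes the base of the exponential, which is exactly what the interleaving theorem promises to avoid. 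The whole point of the result is to eliminate that polynomial factor, and your bookkeeping is too coarse to do so.

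The missing idea is that at a node $v$ with current parameter $k_v$, the kernelized instance has size $s_{\Algo K}(k_v)$, so the local work is a polynomial in $k_v$, not in the root parameter~$k$. Summing these local contributions is then governed by the recurrence $W(k) \le \sum_{i} W(k-d_i) + p(k)$ for a polynomial~$p$. As the paper spells out in the proof of Theorem~\ref{theorem:shallowInterleaving} (via the auxiliary recurrence $U_k = \xi_d U_{k-1} + f_k$), when the inhomogeneity $f_k$ is polynomial and $\xi_d>1$, the solution is $O(\xi_d^k)$ --- the polynomial is absorbed into the exponential by the recurrence itself, not by an after-the-fact inequality. Replace your uniform $k^{O(1)}$ charge by the local $k_v^{O(1)}$ charge and invoke this recurrence analysis, and the argument goes through.
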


\subsection{Simple Parallel Interleaving}

Interleaving also helps to reduce the work of parallel search tree
algorithms: Consider the algorithm
$\Algo{ParInterleave-B-K}$, the version of $\Algo{ParSearchTree-B}$
that applies $\Algo K$ at the beginning of each recursive call.
First applying $\Algo K$ and then computing branch
instances using $\Algo B$ is itself a branching algorithm and, thus,
Theorem~\ref{thm-branch-simple}  tells us that
$T_{\Algo{SeqInterleave-B-K}}(n,k)=W_{\Algo{ParInterleave-B-K}}(n,k)$
holds.
This observation suggests that in order to minimise the work, we
have to choose the most work-efficient kernel algorithm $\Algo K$
available to us. However, it turns out that 
we have more options in the parallel setting: The work of $\Algo K$ is
only relevant at the 
very beginning, when the input size still depends on $n$. Later on,
all remaining computations get inputs whose size depends only on the
parameter. For these calls, the work of $\Algo K$ is no longer
relevant -- it is ``drowned out'' by $\xi_D^k$. This suggests the
following strategy: We use \emph{two} kernels, namely an \emph{initial
kernel} whose job is to quickly and, more importantly, work-efficiently
reduce the input size once (how such kernels can be constructed was
exactly what we investigated in Section \ref{sec:kernels}); and then use an
\emph{interleaving kernel} during the actual interleaving, whose job
is just to kernelize the intermediate instances as quickly as possible
-- but we need no longer care about the work! Let us write $\Algo A |
\Algo B$ for the 
sequential concatenation of algorithms $\Algo A$ and $\Algo B$.

\begin{theorem}\label{theorem:interleave}
  Let $\Algo B$ be a $d$-branching algorithm, and let $\Algo
  {K}_{\Algo{init}}, \Algo K_{\Algo{interleave}}$ be poly\-no\-mial-sized
  kernels.  Then $W_{\Algo
    K_{\Algo{init}}|\Algo{ParInterleave-B-K$_{\text{inteleave}}$}}(n,k)\in
  O(W_{\Algo K_{\Algo{init}}}(n,k)+\xi_d^k)$.
\end{theorem}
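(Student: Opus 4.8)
The plan is to decompose the work of the composite algorithm $\Algo K_{\Algo{init}} \mid \Algo{ParInterleave-B-K}_{\Algo{interleave}}$ into two additive contributions and bound each separately. First I would observe that running $\Algo K_{\Algo{init}}$ contributes exactly $W_{\Algo K_{\Algo{init}}}(n,k)$ work, after which we are left with an instance $(x',k')$ of size $|x'| \le s_{\Algo K_{\Algo{init}}}(k)$, a polynomial in~$k$ (say $|x'| \le k^{c_0}$), on which $\Algo{ParInterleave-B-K}_{\Algo{interleave}}$ is run. So it suffices to show $W_{\Algo{ParInterleave-B-K}_{\Algo{interleave}}}(k^{c_0},k) \in O(\xi_d^k)$, i.e.\ that on an input whose size is already polynomially bounded in~$k$, the parallel interleaved search tree does only $O(\xi_d^k)$ work.

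Second, I would invoke the identity noted just before the theorem statement: since first applying $\Algo K_{\Algo{interleave}}$ and then branching with~$\Algo B$ is itself a branching algorithm, Theorem~\ref{thm-branch-simple} gives
\[
  W_{\Algo{ParInterleave-B-K}_{\Algo{interleave}}}(n,k) = T_{\Algo{SeqInterleave-B-K}_{\Algo{interleave}}}(n,k).
\]
Now apply Fact~\ref{fact-seq-interleaving} with $\Algo K = \Algo K_{\Algo{interleave}}$ (which is a polynomial-size kernel, as required) and the $d$-branching algorithm~$\Algo B$: this yields $T_{\Algo{SeqInterleave-B-K}_{\Algo{interleave}}}(n,k) \le \xi_d^k + n^{O(1)}$. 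Evaluating this at $n = |x'| \le k^{c_0}$ turns the $n^{O(1)}$ term into $k^{O(1)}$, which is absorbed into $\xi_d^k$ (since $\xi_d > 1$, polynomial factors in~$k$ are dominated by $\xi_d^k$, and indeed $k^{O(1)} + \xi_d^k = O(\xi_d^k)$). Combining the two contributions gives $W_{\Algo K_{\Algo{init}} \mid \Algo{ParInterleave-B-K}_{\Algo{interleave}}}(n,k) \in O(W_{\Algo K_{\Algo{init}}}(n,k) + \xi_d^k)$.

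The one point that requires a little care -- and where I expect the main (though minor) obstacle to lie -- is the bookkeeping around input sizes and the ``$|x_i| \le |x|$'' convention from Section~\ref{sec:search-trees}. Fact~\ref{fact-seq-interleaving} is stated for a runtime measured on inputs of length~$n$, and I am applying it to the post-$\Algo K_{\Algo{init}}$ instance whose length is $s_{\Algo K_{\Algo{init}}}(k)$ rather than the original~$n$; I need the monotonicity assumption on $W$ and $T$ (stated in the preliminaries) to justify $T_{\Algo{SeqInterleave-B-K}_{\Algo{interleave}}}(k^{c_0},k) \le \xi_d^k + (k^{c_0})^{O(1)} = O(\xi_d^k)$, and I need that the kernel sizes compose sensibly (the polynomial kernel bound $s_{\Algo K_{\Algo{interleave}}}$ applied to inputs of size $\operatorname{poly}(k)$ still yields $\operatorname{poly}(k)$). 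All of these are routine, so the proof is essentially an assembly of Theorem~\ref{thm-branch-simple}, Fact~\ref{fact-seq-interleaving}, and the additivity of work under sequential concatenation, plus the observation that polynomials in~$k$ are swallowed by~$\xi_d^k$.
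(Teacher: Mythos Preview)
Your proposal is correct and follows essentially the same route as the paper: decompose the work into $W_{\Algo K_{\Algo{init}}}(n,k)$ plus $W_{\Algo{ParInterleave-B-K}_{\Algo{interleave}}}(s_{\Algo K_{\Algo{init}}}(k),k)$, invoke Fact~\ref{fact-seq-interleaving} on the second term to bound it by $\xi_d^k + (s_{\Algo K_{\Algo{init}}}(k))^{O(1)}$, and absorb the resulting $k^{O(1)}$ into~$\xi_d^k$. Your explicit mention of the equality $W_{\Algo{ParInterleave-B-K}_{\Algo{interleave}}} = T_{\Algo{SeqInterleave-B-K}_{\Algo{interleave}}}$ via Theorem~\ref{thm-branch-simple} makes a step explicit that the paper leaves implicit, but the argument is otherwise identical.
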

\begin{proof}
  Let $\Algo B,\Algo K_{\text{init}},$ and $\Algo
  K_{\text{interleave}}$ be algorithms as defined above. It is easy to
  see that
  \begin{align*}
    &W_{\Algo
    K_{\text{init}}|\Algo{ParInterleave-B-K$_{\text{interleave}}$}}(n,k)\\
    \leq{}&
    W_{\Algo K_{\text{init}}}(n,k) +
    W_{\Algo{ParInterleave-B-K$_\text{interleave}$}}(s_{\Algo
    K_{\text{init}}}(k),k).
  \end{align*}
  From Fact~\ref{fact-seq-interleaving} we get
  \begin{align*}
    W_{\Algo
    K_{\text{init}}|\Algo{ParInterleave-B-K$_\text{interleave}$}}(n,k)\leq
    W_{\Algo K_{\text{init}}}(n,k)+(s_{\Algo
    K_\text{init}}(k))^{O(1)}+\xi_{d}^k
  \end{align*}
  and finally with
  $(s_{\Algo K_\text{init}}(k))^{O(1)}\in O(k^{O(1)})$ we get the
  claim.
  \qed
\end{proof}

\subsection{Shallow Parallel Interleaving}

At the end of Section \ref{sec:search-trees} we introduced the idea of
\emph{shallow} search trees as a method to speedup parallel
search tree algorithms. However, shallow search trees are
not necessarily compatible with the interleaving technique: From the
parallel point of view, a ``perfect'' branching algorithm would branch
on input $(G,k)$ in constant time to $m = \xi_D^k$ simple
instances $(G_1,1), \dots, (G_m,1)$, all of which can then be
processed in parallel. Applying a kernel at this point is ``too
late'': The work will be something like $m = \xi_D^k$ times the work of
the kernel, which is decidedly not of the form 
$\xi_D^k$ \emph{plus} the work of the kernel. 

What goes wrong here is, of course, that we parallelize ``too much'':
we must ensure that the kernel
algorithm gets a chance to kick in while the inputs still have a
\emph{large} enough size. On inputs of (still) large parameter~$k$,
all branches have to have a parameter of size \emph{at least}
$\varepsilon k$ (normally, we want a parameter \emph{at most}
$\varepsilon k$). We remark that it does 
not follow from~\cite{NiedermeierR2000} that interleaving is possible
here since \cite{NiedermeierR2000} considers only the case where the
number of branch instances is bounded by a constant. 
For the following theorem, let us write $d(x,k)$
for the branching vector $d$ used by $\Algo B$ on input $(x,k)$ and
$|d(x,k)|$ for its length.

\begin{theorem}\label{theorem:shallowInterleaving}
  Let $\Algo B$ be a $D$-branching algorithm such that for all inputs
  $(x,k)$, (a) the work 
  done by $\Algo B$ is at most $|d(x,k)|\cdot |x|^{O(1)}$ and (b) the maximum
  value in $d(x,k)$ is at most $(1-\varepsilon)k + O(1)$.  Let $\Algo K$ be a
  polynomially-sized kernel algorithm. Then
  $W_{\Algo{ParInterleave-B-K}}(n,k)=O(W_{\Algo  K}(n,k)+\xi_D^k)$.
\end{theorem}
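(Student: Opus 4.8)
The plan is to analyze the recursion tree of $\Algo{ParInterleave-B-K}$ and bound the total work by separating the contribution of the very first level (the root) from that of all deeper levels. At the root, the input still has size depending on~$n$, so applying $\Algo K$ there costs $W_{\Algo K}(n,k)$ and the branching step costs $|d(n,k)|\cdot n^{O(1)}$ by assumption~(a). After this first kernelization-and-branch, however, all child instances have size bounded by $s_{\Algo K}(k') \le s_{\Algo K}(k) = k^{O(1)}$ (since $\Algo K$ is a polynomially-sized kernel), and from that point on the size of every instance stays polynomially bounded in the parameter because $\Algo K$ is reapplied at the start of every recursive call. So below the root the total work is at most the number of tree nodes times $k^{O(1)}$, and the number of nodes is controlled by the branching vectors via the standard recurrence $s(k) \le \max_{d\in D}\bigl(\sum_i s(k-d_i) + 1\bigr)$, whose minimal solution is $\Theta(\xi_D^k)$ as recalled after Theorem~\ref{thm-branch-simple}. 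Multiplying, the work below the root is $O(\xi_D^k \cdot k^{O(1)})$, which is $O(\xi_D^k)$ after absorbing the polynomial factor into a slightly larger base (or, more carefully, by noting $k^{O(1)}\xi_D^k = O(\xi_D'^k)$ for any $\xi_D' > \xi_D$, and we may simply fold the polynomial into the claimed $O(\xi_D^k)$ as is standard in this setting — indeed this is exactly the kind of bound Fact~\ref{fact-seq-interleaving} already delivers).

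The first thing I would do precisely is set up the decomposition: write
\begin{align*}
  W_{\Algo{ParInterleave-B-K}}(n,k) &\le W_{\Algo K}(n,k) + |d(n,k)|\cdot n^{O(1)} \\
  &\quad + \sum_{(x_i,k_i)} W_{\Algo{ParInterleave-B-K}}(s_{\Algo K}(k), k_i),
\end{align*}
where the sum is over the children produced at the root. Then I would observe that the recursive term on the right, evaluated at an argument of size $s_{\Algo K}(k) = k^{O(1)}$, is governed by Fact~\ref{fact-seq-interleaving} (whose hypotheses — a polynomially-sized kernel and a polynomial-time branching algorithm, the latter following from assumption~(a) since $|d(x,k)| \le 2^k$ is itself at most an exponential but, crucially, the whole interleaved recursion is what Fact~\ref{fact-seq-interleaving} bounds, giving $\operatorname{size}_{\Algo B}(n,k) + n^{O(1)} \le \xi_D^k + n^{O(1)}$) — and since the input size there is $k^{O(1)}$, the $n^{O(1)}$ term becomes $k^{O(1)}$, leaving $\xi_D^k + k^{O(1)} = O(\xi_D^k)$. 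Finally, $|d(n,k)| \le \xi_D^k$ and $n^{O(1)}$ absorbs into $W_{\Algo K}(n,k)$ if the kernel does at least linear work (or is simply a lower-order additive term), so the whole bound collapses to $O(W_{\Algo K}(n,k) + \xi_D^k)$.

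The main obstacle, and the place I would spend the most care, is verifying that assumption~(b) — the maximum entry of $d(x,k)$ is at most $(1-\varepsilon)k + O(1)$ — is actually needed and where it enters. In the plain interleaving bound (Fact~\ref{fact-seq-interleaving}) the branching vector has bounded length, so every child has parameter $k - d_i$ with $d_i \ge 1$, and the recursion tree has depth $O(k)$; here the vector can be exponentially long, and without~(b) a single branch could drop the parameter from $k$ all the way to $O(1)$ in one step, producing an instance of size $s_{\Algo K}(k)$ but parameter $O(1)$, and then the \emph{number} of such leaves could be as large as $\xi_D^k$ while each still carries the cost of one kernelization on an input of size $k^{O(1)}$ — which is fine — but the subtlety the authors flag before the theorem is the opposite failure mode, where we branch \emph{too aggressively at the top} and create $\xi_D^k$ instances before $\Algo K$ has had a chance to shrink things, so that $\xi_D^k$ gets multiplied by a work term still depending on~$n$. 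Assumption~(b) prevents this by forcing every branch to retain parameter at least $\varepsilon k$, so the instance size at every node below the root is genuinely $k^{O(1)}$ rather than $n^{O(1)}$ — i.e. (b) guarantees the kernel "kicks in while the inputs are still large enough," which is precisely the informal point made just before the theorem statement. I would therefore make the proof pivot on the claim that, under~(b), every node at depth $\ge 1$ in the recursion tree receives an instance of size at most $s_{\Algo K}(k) = k^{O(1)}$, and then the rest is the routine counting above together with an appeal to Fact~\ref{fact-seq-interleaving}.
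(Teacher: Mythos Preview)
Your proof has a genuine gap: you rely on Fact~\ref{fact-seq-interleaving} to eliminate the polynomial factor from the bound $\xi_D^k\cdot k^{O(1)}$, but that fact assumes the branching algorithm runs in polynomial time, which here means the length $|d(x,k)|$ of the branching vector is bounded. In the shallow setting $|d(x,k)|$ may be exponential in~$k$, so the hypothesis fails; the paper says so explicitly just before the theorem (``it does not follow from~\cite{NiedermeierR2000} that interleaving is possible here since~\cite{NiedermeierR2000} considers only the case where the number of branch instances is bounded by a constant''). The theorem is thus an \emph{extension} of interleaving to exponential fan-out, not a corollary of the classical result, and the polynomial factor in $\xi_D^k\cdot k^{O(1)}$ cannot simply be ``folded in'': that is exactly what must be proved. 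The same issue already bites at your root term $|d(n,k)|\cdot n^{O(1)}$ (which after kernelization is really $|d|\cdot k^{O(1)}$): since $|d|$ can be of order $\xi_D^k$, this term alone is $\xi_D^k\cdot k^{O(1)}$.

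Your reading of condition~(b) is also off. You argue that (b) ensures every node at depth $\ge 1$ has instance size at most $s_{\Algo K}(k)=k^{O(1)}$, but that holds automatically once the root kernelization runs, with or without~(b). The actual role of~(b) is different. A node $(x',k')$ does work $|d(x',k')|\cdot q(k')$ for some polynomial~$q$ (after the kernel), and the $|d|$ factor may be exponential in~$k'$. The paper's proof \emph{charges} this work to the children, one copy of $q(k')$ per child, and this only balances if each child's parameter $k_i$ satisfies $q(u(k_i))\ge q(k')$ for the blow-up function $u(k)=k/\varepsilon+O(1)$. Condition~(b) guarantees precisely $k_i\ge\varepsilon k'-O(1)$, hence $u(k_i)\ge k'$, making the charging go through. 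After this relabelling every node in the tree carries work polynomial in its own parameter, and then a direct recurrence argument---showing $W_k\le U_k$ where $U_k=\xi_D\,U_{k-1}+f_k$ with polynomial inhomogeneity~$f_k$---delivers the clean $O(\xi_D^k)$ bound. Without~(b), a node could branch into exponentially many children each with parameter~$O(1)$; the branching work at that node is then genuinely $\xi_D^{k'}\cdot {k'}^{O(1)}$ and cannot be absorbed anywhere, so your claim that this failure mode is ``fine'' is incorrect.
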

\begin{proof}
  Let $\Algo B$ be a $D$-branching algorithm as in the
  theorem. We first show that the total work of a shallow search tree algorithm
  can be mapped to nodes of a tree such that every node is labelled with a 
  polynomial work $p(k)$ instead of some superpolynomial work
  $\left|d(x,k)\right|k^{O(1)}$. Then we show that we get the same upper bound
  for the total work in this tree according to arguments from~\cite{NiedermeierR2000}.

  Consider the tree of $\Algo{ParInterleave-B-K}$ for an arbitrary
  input. Each node in this tree is an instance $x$ with its parameter
  $k$. $\Algo{ParInterleave-B-K}$ runs the kernel
  algorithms~$\Algo{K}$ and the branching algorithm $\Algo{B}$ on
  these instances. Let us label each node in the traversed tree with
  the work $W_1(x,k)$ that is necessary to run $\Algo{B}|\Algo{K}$ at this point,
  and let us use $T_1$ to denote this tree.  For $T_1$ there
  is an upper bound $u(k)=k/\varepsilon + O(1)$ such that for each
  node $(x,k)$ and its parent node $(x',k')$,
  we have that $k'\leq u(k)$. The sole exception is the root of
  $T_1$. This holds because every value in
  $d(x,k')$ is at most $(1-\varepsilon)k'+O(1)$ such that
  $k\geq k'-(1-\varepsilon)k'+O(1)$.  Since $\Algo K$ is a polynomial
  kernel algorithm, each node with an instance $x$ and a parameter $k$
  in $T_1$ is labelled with at most
  $W_{\Algo K}(|x|,u(k)) + |d(x,k)|\cdot q(k)$ for some polynomial~$q$. Note that the total work of $\Algo{ParInterleave-B-K}$ is the
  sum of the labels over all nodes of $T_1$. We can construct
  another labelled tree $T_2$ such that the sum of its labels is more
  than the total work of $T_1$ and such that the labels are bounded by a
  polynomial: Let $T_2$ consists of the same nodes as $T_1$, and
  let each node $(x,k)$ be labelled with
  $W_2(x,k)=W_{\Algo K}(|x|,u(k))+2q(u(k))$.  Since $W_{\Algo K}$,
  $s_{\Algo K}$, $q$, and $u$ are polynomials, $W_2$ is also a
  polynomial. We show that the total sum of all labels in
  $T_1$ is smaller than the total sum of labels in $T_2$. We can
  ignore the term $W_{\Algo K}(|x|,u(k))$ because it is part of both
  labels.  Consider some node $(x,k)$ in $T_1$ and in $T_2$: If
  $(x,k)$ has no children, then it is easy to see that
  $W_1(x,k)-W_{\Algo K}(|x|,u(k))\leq W_{\Algo B}(s_{\Algo K}(k),k)=
  q(k)\leq q(u(k))=\frac{W_2(x,k)-W_{\Algo K}(|x|,u(k))}{2}$.
  Otherwise, we have $|d(x,k)|$ many children
  $(x_1,k_1),\dots,(x_m,k_m)$, and for every $(x_i,k_i)$ we know that
  $k_i\geq \varepsilon k$. It follows that
  \begin{align*}
    |d(x,k)|\cdot q(k) \leq |d(x,k)|\cdot q(u(\varepsilon k))\leq \sum_{i=1}^m q(u(k_i)).
  \end{align*}
  As a result, the work of $(x,k)$ in $T_1$ can be mapped to the work of the children in $T_2$.
  
  Since $\Algo K$ is a polynomial kernel algorithm, for every node $(x,k)$ in $T_2$ it holds that $|x|\leq s_{\Algo K}(u(k))$, or
  $(x,k)$ is the root node.  This implies that $T_2$ is a tree in which every label is
  polynomially bounded with respect to the parameter. The only exception is the
  additional term $W_{\Algo K}(|x|,k)$ in the label of the root node. 
  It follows that the total
  work of our algorithm can be estimated by using recurrence equations where the
  inhomogeneity is a polynomial. Note that this is only possible with the
  necessary upper bound for the  values in the branching vectors. 

  It remains to show that the work of the interleaving algorithm is
  $O(\xi_D^k)$ in addition to the application of the kernel, which is bounded by
  the sum of all labels of $T_2$. To prove
  this, we use the following terminology: We wish to bound values
  $W_k$ for $k \in \mathbb N$ for which we know that the following
  holds:
  \begin{align}
    \textstyle W_k \le \sum_{i=1}^{|d^k|} W_{k-d_i^k} + f_k, \label{eq-wk}
  \end{align}
  where each $d^k$ is a branching vector (having length $|d^k|$) and
  $f_k \in \mathbb N$ are numbers. Let $\xi_{d^k}$ be the reciprocal of the minimal
  root of the polynomial
  \begin{align}
    \textstyle 1 - \sum_{i=1}^{|d^k|} x^{d_i^k} \label{eq-poly}
  \end{align}
  and let $\xi = \sup_{d^k\in D} \xi_{d^k}$. Then we have $\xi_D=\xi$.

  Define values $U_k$ (for ``upper bound'') by the 
  recurrence equation
  \begin{align}
    U_k = \xi U_{k-1} + f_k \label{eq-uk}
  \end{align}
  and $U_0 = f_0$ and observe that this recursion has a unique
  solution. We prove by induction that
  \begin{align*}
    W_k \le U_k
  \end{align*}
  holds for all $k \in \mathbb N$ and all solutions~$W_k$ of
  \eqref{eq-wk}. Clearly, the claim holds for $k=0$ since $W_k \le f_k
  = U_k$ for $k=0$. For the inductive step, observe that \eqref{eq-uk}
  clearly implies $U_k \ge \xi U_{k-1}$. Thus $U_k \ge \xi^i
  U_{k-i}$ for $i \ge 1$ and, in particular, $U_{k-1} \ge \xi^{i-1}
  U_{k-i}$, which in turn is equivalent to
  \begin{align}
    U_{k-i} \le \xi^{1-i} U_{k-1}.\label{eq-ubound}
  \end{align}
  This allows us to bound $W_k$ as follows: \allowdisplaybreaks
  \begin{align*}
    W_k &\le \textstyle \sum_{i=1}^{|d^k|} W_{k-d_i^k} + f_k \tag{by \eqref{eq-wk}} \\
        &\le \textstyle \sum_{i=1}^{|d^k|} U_{k-d_i^k} + f_k \tag{by induction hypothesis} \\
        &\le \textstyle \sum_{i=1}^{|d^k|} \xi^{1-d_i^k} U_{k-1} + f_k \tag{by \eqref{eq-ubound}} \\
        &= \textstyle \xi \sum_{i=1}^{|d^k|} \xi^{-d_i^k} U_{k-1} + f_k \\
        &\le \textstyle \xi U_{k-1} + f_k \tag{by \eqref{eq-poly}}\\
        &= U_k \tag{by definition of $U_k$}
  \end{align*}

  We now know that in order to bound the runtime of the interleaving
  algorithm, it suffices to solve the recurrence equation $U_k = \xi_D U_{k-1} +
  f_k$. However, it is well-known that when $f_k$ is a polynomial,
  this has the solution $U_k = \lambda\xi_d^k + p(k)$ for some polynomial $p$ and
  some for constant $\lambda$, see~\cite{NiedermeierR2000}. Since the labels
  $W_2(x,k)$ in the tree $T_2$ are polynomially bounded by $k$, 
  the sum of all labels in $T_2$ is $W_{\Algo{K}}(n,k)+O(\xi_D^k)$. 
  Finally, this gives us the claimed
  work for our algorithm.
  \qed
\end{proof}

Note that the search trees arising from the branching rule $\Algo
B_*$ always have property (b), that is, they never ``parallelize too
well'' since we capped to size $m$ of $M$ to $(k-|P|)/(s+1)$ and, thus,
$k - |P| - |X_1| - \cdots - |X_m| \ge k - |P| - s(k-|P|)/(s+1) =
(k-|P|)/(s+1)$, meaning that we can set $\varepsilon = 1/(s+1)$. 

\section{Conclusion and Outlook}
We have begun to extend the field of parallel parameterized algorithms with
respect to work-optimality. This is a first step towards the aim of
closing the gap between theoretical parallel algorithms (which are
fast but produce massive work) and algorithms that work well in
practice. To that end we provided a framework that allows to transform
sequential search tree algorithms as well as kernelizations into
parallel algorithms that are work-efficient. Furthermore, we have
shown that combining both techniques via interleaving is still
possible in the parallel setting. There are multiple paths to extend
this line of research: It would be interesting to know if the
presented algorithms do, in fact, lead to competitive parallel
implementations. From the theory point of view, a natural next step is
to study which other fpt techniques allow work-optimal implementations.

\bibliographystyle{plain}
\bibliography{main}

\begin{thebibliography}{10}

\bibitem{BannachST15}
M.~Bannach, C.~Stockhusen, and T.~Tantau.
\newblock Fast parallel fixed-parameter algorithms via color coding.
\newblock In {\em {IPEC 2015}}, pages 224--235, 2015.

\bibitem{BannachT17}
M.~Bannach and T.~Tantau.
\newblock Parallel multivariate meta-theorems.
\newblock In {\em {IPEC 2016}}, pages 4:1--4:17, 2016.

\bibitem{BannachT2018a}
M.~Bannach and T.~Tantau.
\newblock Computing hitting set kernels by {AC$^0$}-circuits.
\newblock In {\em {STACS 2018}}, pages 9:1--9:14, 2018.

\bibitem{BannachT2018b}
M.~Bannach and T.~Tantau.
\newblock Computing kernels in parallel: Lower~and~upper~bounds.
\newblock In {\em {IPEC 2018}}, 2018.
\newblock To appear.

\bibitem{CaiCDF97}
L.~Cai, J.~Chen, R.~G. Downey, and M.~R. Fellows.
\newblock Advice classes of parameterized tractability.
\newblock {\em Ann. Pure Appl. Logic}, 84(1):119--138, 1997.

\bibitem{CesatiI1998}
M.~Cesati and M.~Di Ianni.
\newblock Parameterized parallel complexity.
\newblock In David Pritchard and Jeff Reeve, editors, {\em Euro-Par 1998},
  pages 892--896, 1998.

\bibitem{CheethamDRST03}
J.~Cheetham, F.~Dehne, A.~Rau-Chaplin, U.~Stege, and P.~J. Taillon.
\newblock Solving large fpt problems on coarse-grained parallel machines.
\newblock {\em JCSS}, 67(4):691--706, 2003.

\bibitem{ChenKJ2001}
J.~Chen, I.~A. Kanj, and W.~Jia.
\newblock Vertex cover: Further observations and further improvements.
\newblock {\em J. Algorithms}, 41(2):280--301, 2001.

\bibitem{DowneyF1999}
R.~Downey and M.~R. Fellows.
\newblock {\em Parameterized Complexity}.
\newblock Springer-Verlag, 1999.

\bibitem{ElberfeldST15}
M.~Elberfeld, C.~Stockhusen, and T.~Tantau.
\newblock On the space and circuit complexity of parameterized problems:
  Classes and completeness.
\newblock {\em Algorithmica}, 71(3):661--701, 2015.

\bibitem{Han1995}
Y.~Han.
\newblock An improvement on parallel computation of a maximal matching.
\newblock {\em IPL}, 56(6):343--348, 1995.

\bibitem{JaJa92}
J.~J{\'{a}}J{\'{a}}.
\newblock {\em An Introduction to Parallel Algorithms}.
\newblock Addison-Wesley, 1992.

\bibitem{KarpW85}
Richard~M. Karp and Avi Wigderson.
\newblock A fast parallel algorithm for the maximal independent set problem.
\newblock {\em J. {ACM}}, 32(4), 1985.

\bibitem{NiedermeierR2000}
R.~Niedermeier and P.~Rossmanith.
\newblock A general method to speed up fixed-parameter-tractable algorithms.
\newblock {\em IPL}, 73(3):125--129, 2000.

\end{thebibliography}

\clearpage
\end{document}